\pdfoutput=1

\documentclass{jgaa-art}
\usepackage[utf8]{inputenc}
\usepackage[T1]{fontenc}

\usepackage{amsfonts}
\usepackage{enumitem}
\usepackage{mathtools}
\usepackage{algorithm}
\usepackage[noend]{algpseudocode}
\usepackage{mod}
\usetikzlibrary{arrows}
\usepackage{booktabs}
\usepackage[normalem]{ulem}

\usepackage[binary-units=true]{siunitx}
\usepackage{subcaption}
\captionsetup{subrefformat=parens}
\usepackage{pdflscape}
\usepackage{afterpage}
\usepackage{lineno}

\usepackage{pgfplots}

\definecolor{papergrey}{RGB}{51, 51, 51}
\definecolor{paperred}{RGB}{222, 6, 26}
\definecolor{paperblue}{RGB}{31, 119, 180}
\definecolor{papergreen}{RGB}{79, 180, 67}
\definecolor{paperyellow}{RGB}{8, 81, 156}
\definecolor{paperorange}{RGB}{245, 121, 0}
\definecolor{paperpurple}{RGB}{148, 0, 211}
\definecolor{paperlightpurple}{RGB}{178, 177, 204}
\definecolor{paperdarkpurple}{RGB}{67, 45, 133}

\newcommand{\plotfile}[1]{
	\pgfplotstableread{#1}{\table}
	\pgfplotstablegetcolsof{#1}
	\pgfmathtruncatemacro\numberofcols{\pgfplotsretval-1}
	\pgfplotsinvokeforeach{1,...,\numberofcols}{
		\pgfplotstablegetcolumnnamebyindex{##1}\of{\table}\to{\colname}
		\addplot table [y index=##1] {#1}; 
		\addlegendentryexpanded{\colname}
	}
}

\pgfplotscreateplotcyclelist{paper-cycle}{
	{papergrey},
	{paperred},
	{papergreen},
	{paperblue},
	{paperorange},
	{paperpurple},
	{paperlightpurple},
	{paperdarkpurple}
}

\pgfplotsset{compat=newest}

\pgfplotstableset{col sep=comma} 


\pgfplotsset{
	compat=newest,
	width=\textwidth, 
	height=10cm,
	cycle list name=paper-cycle,
	xtick pos=left,
	ytick pos=left,
	label style={font=\normalsize},
	legend style={font=\tiny},
	novo style/.style={
		thick,
		ymin=0,
		xmin=1,
		xtick={1,3,...,15},
	},
	novo bar style/.style={
		cycle list name=paper-cycle,
		height=\textwidth,
		ymin=0
	}
}


\newtheorem{thm}{Theorem}[section]
\newtheorem{lem}[thm]{Lemma}

\newtheorem{definition}[thm]{Definition}
\newtheorem{property}[thm]{Property}

\newtheorem{problem}{Problem}

\newcommand{\Der}[3]{#1\xRightarrow{#2}#3}

\usepackage{todonotes}

\newcommand{\Rule}[1]{(#1 \xleftarrow{l}{} K \xrightarrow{r}{} R)}
\newcommand{\pmap}{\rightharpoonup}
\newcommand{\tug}[1]{\widetilde{\mathbf{#1}}}
\newcommand{\ug}[1]{\mathbf{#1}}
\newcommand{\PA}{PA}
\newcommand{\IPA}{PA-I}
\newcommand{\EDE}{EDE}
\newcommand{\SEDE}{EDE-S}
\newcommand{\SSEDE}{EDE-SS}
\DeclareMathOperator{\dom}{dom}
\DeclareMathOperator{\codom}{codom}

\DeclareMathOperator{\id}{id}

\algnewcommand\algorithmicforeach{\textbf{for each}}
\algdef{S}[FOR]{ForEach}[1]{\algorithmicforeach\ #1\ \algorithmicdo}

\algdef{SE}[VARIABLES]{Variables}{EndVariables}
   {\algorithmicvariables}
   {\algorithmicend\ \algorithmicvariables}
\algnewcommand{\algorithmicvariables}{\textbf{global variables}}

\tikzset{vGraph/.style={node distance=0.5cm}}
\tikzset{eMorphism/.style={->, >=stealth}}
\tikzset{eIsomorphism/.style={double, double equal sign distance}}
\tikzset{vDot/.style={draw, fill, minimum size=3pt, inner sep=0, circle, outer sep=0}}
\tikzset{vDot2/.style={vDot, fill=none}}
\tikzset{eDot/.style={}}
\tikzset{eDotMorphism/.style={->, >=stealth, dashed, red}}
\tikzset{smaller labels/.style={every label/.append style={minimum size=0, inner sep=0, font=\scriptsize}}}
\tikzset{gDot/.style={node distance=2.5em and 2.5em, smaller labels}}
\tikzset{gCat/.style={node distance=2em and 2em}}
\tikzset{vCat/.style={draw, minimum size=2.5em}}

\newcommand{\summaryRuleSpan}[2][]{%
	\begin{tikzpicture}[
		node distance=20pt,
		vertex/.style={draw},
		normal/.style={eMorphism}
	]
	\node[vertex,label=above:$L$](L)						{\includegraphics[#1]{#2_L}};
	\node[vertex,label=above:$K$](K)[right=of L]	{\includegraphics[#1]{#2_K}};
	\node[vertex,label=above:$R$](R)[right=of K]	{\includegraphics[#1]{#2_R}};
	\draw[normal](K) to (L);
	\draw[normal](K) to (R);
	\end{tikzpicture}%
}

\begin{document}

\HeadingAuthor{Andersen et al.}
\HeadingTitle{Efficient Modular Graph Transformation Rule Application}

\title{Efficient Modular Graph Transformation Rule Application}
\Ack{This work is supported by the Novo Nordisk Foundation grant NNF19OC0057834 and by the Independent Research Fund Denmark, Natural Sciences, grants DFF-0135-00420B and DFF-7014-00041.}

\authorOrcid[first]{Jakob L. Andersen}{jlandersen@imada.sdu.dk}{0000-0002-4165-3732}
\authorOrcid[first]{Rolf Fagerberg}{rolf@imada.sdu.dk}{0000-0003-1004-3314}
\authorOrcid[first,second]{Juri Kolčák}{juri.kolcak@gmail.com}{0000-0002-9407-9682}
\authorOrcid[first]{Christophe V.F.P. Laurent}{christophe@imada.sdu.dk}{0000-0002-9112-6981}
\authorOrcid[first]{Daniel Merkle}{daniel@imada.sdu.dk}{0000-0001-7792-375X}
\authorOrcid[first]{Nikolai Nøjgaard}{nojgaard@imada.sdu.dk}{0000-0002-7053-4716}

\affiliation[first]{Department of Mathematics and Computer Science, University of Southern Denmark, Odense, Denmark}
\affiliation[second]{Department of Systems Biology, Harvard Medical School, Boston, MA, USA}

\maketitle

\begin{abstract}
Graph transformation formalisms have proven to be suitable tools for the modeling of chemical reactions. They are well established in theoretical studies \cite{g2006fundamentals} and increasingly also in practical applications in chemistry \cite{benko2003graph, yadav2004potential}. The latter is made feasible via the development of programming frameworks which make the formalisms executable~\cite{andersen2016software, Harris2016bionetgen, boutillier2018kappa}.

The application of such frameworks to large networks of chemical reactions poses unique computational challenges due to the nature of the involved graphs, since
these often consist of many individual connected components.
While the existing methods for implementing graph transformations can be applied to such graphs,
the combinatorics of constructing the graph matches involved in the application of graph transformation rules quickly becomes a computational bottleneck when the size of the chemical reaction network grows.

In this contribution, we develop a new method of enumerating graph matches during graph transformation rule application. The method is designed to improve performance in such scenarios and is based on constructing graph matches in an iterative, component-wise fashion which allows redundant applications to be detected early and pruned.
We further extend the algorithm with an efficient heuristic based on local symmetries of the graphs,
which allow us to detect and discard isomorphic applications early.
Finally, we conduct chemical network generation experiments on real-life as well as synthetic data and compare against the state-of-the-art algorithm in the field.
\end{abstract}

\section{Introduction}
A reaction network is a set of chemical reactions describing how a collection of molecules interact.
Reaction networks are used in many areas of research, for instance in biochemical pathway modeling~\cite{kanehisa2012kegg}, in drug design~\cite{grom2016modelling}, and in the study of planetary atmospheres~\cite{yung1998photochemistry}.
Known networks are commonly made available through manually curated databases, of which the KEGG \cite{kanehisa2012kegg} collection
of metabolic networks is one example.
Networks stored in such databases are inherently subject to sampling bias:
reactions and molecules are only recorded if the database curator deems them of interest in some known chemistry being examined \cite{benko2003graph}.
If one wishes to explore unknown chemistries, e.g., when designing new enzymatic mechanisms \cite{andersen2021graph},
one may instead turn to the generation of reaction networks using rule-based approaches.

Formal rule-based methods for generating reaction networks were already being investigated as early as the 90's \cite{fontana1990algorithmic}.
However, it is only recently that practical application has been made possible via the development of computational frameworks such as Kappa~\cite{boutillier2018kappa}, BioNetGen~\cite{Harris2016bionetgen}, and MØD~\cite{andersen2016software}.
Of these, Kappa and BioNetGen employ a unique labeling of connections, suited
for an abstract representation of binding sites and entire molecules, useful in modeling biological networks.
The unique labeling scheme is, however, ill suited for the modeling
of reaction networks, where the complete structure of each molecule at the level of individual bonds and atoms is of interest.

To the best of our knowledge, MØD~\cite{andersen2016software} is the only available computational framework with atom-level modeling of molecules. Indeed, its modeling is well aligned with the standard textbook description of molecules as chemical graphs, i.e., undirected graphs with vertices labeled by atom types and edges labeled by bond types. As argued in \cite{compSyst:17}, this strikes a good balance between chemical expressiveness and computational feasibility when studying reaction networks in a generative manner. We use the term chemical graph and the term molecule interchangeably from now on.

In MØD, the generation of reaction networks is achieved by means of rules for transforming chemical graphs.
A rule has a left side chemical graph pattern which specifies the minimum necessary molecular context for the transformation to take place. The rule can be applied to any molecule (or combination of molecules, for patterns with multiple parts) that contains the pattern. The right hand side of the rule then specifies how to transform the part of the molecule matching the pattern. Thus, applying a rule means matching its left side pattern to a molecule and then executing the transformation specified. In simple terms, a rule models a chemical reaction (or a set of chemical reactions with a common core, depending on the extent of the context specified) and applying the rule executes the reaction.
Constructing chemical networks is done by repeated application of the rules to any combination of the molecules generated, starting from some initial set of molecules.
The formalism underpinning MØD is the so-called double pushout (DPO) approach to algebraic graph transformation. Its use in MØD has been explored in multiple publications \cite{strat:14, andersen2017chemical, behr2019compositionality}.

A graph transformation rule application is a computationally intensive task as
it consists of monomorphism enumeration between the left side graph pattern of the rule and the host graph (i.e., the educt molecules of the reaction executed by the rule).
For larger reaction networks, the sheer number of rule and host graph
combinations makes the efficiency of the rule application of paramount
importance.

The nature of the challenge lies in identifying combinations of molecules which
constitute a host graph allowing a rule to be applied.
One way to do this is to compose the host graph and check the rule conditions at the same time.
This approach is employed in the state-of-the-art
method~\cite{andersen2016software}, which iterates over the connected components of the left side graph (i.e., the pattern) of the rule, while using partial rule application (PA) \cite{trace:14,rcMatch}.
The partial application of the rule is carried out by replacing a component of the rule left side graph by the host graph molecule they match.
The full rule application is then achieved by repeated partial application, each time generating a new version of the rule, until all left side components are replaced.

The number of rules generated by this partial application approach is in the worst case
exponential in the number of connected components of the left side graph of the rule.
Moreover, if the left side graph of the rule and the host graph are highly symmetric, an exhaustive
enumeration will invariably lead to redundant isomorphic applications.
Ideally, the symmetries of both the rule left side and host graph
should be taken into account when enumerating rule applications. However, 
since the rule is modified during the partial application, keeping track of such symmetries is difficult.
While the rules can be pruned by pairwise isomorphism checking, such an operation is computationally rather expensive.
This is further exacerbated when combined with the partial application approach,
where the generated rules grow in size as more molecules are matched onto the left side components.
For complex chemical networks with higher connected component counts and larger molecules,
e.g., enzyme mechanism networks~\cite{andersen2021graph}, the current algorithmic approach therefore struggles.

To address the combinatorial issues highlighted above, we in this paper develop a new algorithm for the
computation of all possible rule applications in a given set of molecules.
Similarly to the partial application approach, our algorithm relies on iterating over
the connected components of the rule left side graph.
However, instead of generating new versions of the rule, we explicitly construct the match (monomorphism)
of the rule into the host graph.
The match allows linking the components of the left side graph of the rule and the host molecules without
directly modifying them, thus avoiding the need to create copies of the rules.
Furthermore, we can also identify local symmetries of both the molecules and the rule and exploit them to prune partial matches known to lead to the same reactions.

We illustrate the speedup obtainable by this direct match construction versus the partial application approach in several experiments
including both synthetic and chemical data.
The results on synthetic data illustrate the potential for an exponential speedup with respect to the number of rule left side components.
A further improvement is then observed as a result of on-the-fly symmetry pruning.
The results on chemical data demonstrate that a speedup is achieved across wide variety of examples, even in cases with few rule left hand side components and close to no symmetries.

\section{Preliminaries}
In this contribution, we consider finite labeled simple graphs $G = (V, E, \lambda)$, with
vertex set $V = V(G)$, edge set $E = E(G)$ and labeling function
$\lambda\colon V \cup E \rightarrow \mathcal{L}$ over some set $\mathcal{L}$ equipped with an equality relation.

We assume the vertex set $V(G)$ is ordered by an arbitrary total order,
represented by an index set $I = \{1, \dots, |V(G)|\}$.  By abuse of notation,
we unify each vertex $v_i\in V(G)$ with its corresponding index $i$.  This is
in line with the practical representation of graphs where vertices explicitly
or implicitly have such an index.

Following the usual convention, given two (unlabeled) graphs $G_1 = (V_1,
E_1)$, $G_2 = (V_2, E_2)$,
a graph (homo)morphism is a function $\varphi\colon G_1 \rightarrow  G_2$, with domain and codomain
$\dom(\varphi) = V_1$ and $\codom(\varphi)\subseteq V_2$,
which preserves edges.
That is, for all $(u,v)\in E_1$ we have $(\varphi(u), \varphi(v))\in E_2$.
If the two graphs are labeled, i.e., $G_1 = (V_1, E_1, \lambda_1)$, $G_2 = (V_2, E_2, \lambda_2)$,
then $\varphi$ must further preserve labels, i.e., for any $v \in V_1$ we have
$\lambda_1(v) = \lambda_2(\varphi(v))$.

An injective graph morphism $\varphi$ is called a monomorphism,
and if one such exists we say that $G_1$ \emph{embeds} in $G_2$.
A monomorphism $\varphi$ that is also bijective is called an isomorphism,
and if one such exists we say that $G_1$ and $G_2$ are \emph{isomorphic},
written as $G_1 \simeq G_2$.
If an isomorphism is from a graph to itself, it is called an
\emph{automorphism}, and represents a symmetry of the graph.

For all of the morphism types we may deal with \emph{partial} versions of them. Notationally we write them with a harpoon, e.g.,
$\varphi\colon G_1 \pmap  G_2$.
For such morphisms we restrict the domain and codomain to only include
vertices of $G_1$ and $G_2$ that $\varphi$ is defined on.

In later sections, we build graphs by combining connected graphs. For this purpose, we utilize the following construction.

\begin{definition}[Union Graph]
    A \emph{union graph} $\mathbf{G}$ of dimension $n$ is a vector $(G_1, \dots, G_n)$ of connected graphs.

    For any graph $G$, let $\mathbf{G}(G) = \{i \in \{1, \dots, n\} \mid G_i \simeq G\}$
    be the set of indices of $\mathbf{G}$ which correspond to graphs isomorphic to $G$.

    The union graph $\mathbf{G}$ itself specifies a graph $\widetilde{\mathbf{G}} = \bigcup_{i = 1}^n G_i$ defined as the disjoint union of the \emph{component} graphs in $\mathbf{G}$.
    Additionally, each vertex $v \in V(\widetilde{\mathbf{G}})$ of the union graph can be uniquely mapped to a vertex of one of the component graphs of $\mathbf{G}$.
    We formalize the mapping as a bijection $\gamma\colon V(\widetilde{\mathbf{G}}) \rightarrow \bigcup_{i \in \{1, \dots, n\}} (\{i\}\times V(G_i))$.
    By abuse of notation, we write $v = (i, w)$ for any $v \in V(\widetilde{\mathbf{G}})$ such that $\gamma\colon v \mapsto (i, w)$.
    \label{def:union-graph}
\end{definition}

\begin{definition}[Union Graph Extension]
    Given a union graph $\mathbf{G} = (G_1, \dots, G_n)$ and a graph $G$,
	the \emph{extension} of $\mathbf{G}$ with $G$ is the union graph:
    \[
		\mathbf{G} \cdot G = (G_1, \dots, G_n, G)
	\]

    \label{def:graph_extension}
\end{definition}

For the transformation of a graph into another, we use the Double Pushout (DPO) approach \cite{corradini1997algebraic}. For a detailed overview of the different variations of
DPO we refer to \cite{habel2001double}. For this paper, we use DPO specifically as defined in~\cite{andersen2016software}
which is aimed at using DPO to model chemistry.
In this framework, a rule $p = \Rule{L}$
consists of a left graph $L$, a context graph $K$, and a right graph $R$, as well as monomorphisms $l$ and $r$ describing how $K$ is embedded in $L$ and
$R$. The transformation captured by a rule consists of replacing the graph $L$ with the graph $R$, while preserving the common parts, identified by $K,l$ and $r$. Note that $L$, $K$, and $R$ are not necessarily connected graphs. 

The application of a rule $\Rule{L}$ to a graph $G$ requires the existence of a morphism $m\colon L \rightarrow G$,
which we, as per~\cite{andersen2016software}, also require to be a monomorphism.
We refer to such a morphism as a \emph{match} of rule $p$ into $G$.
However, a match $m$ alone is not enough to guarantee that the rule can be applied,
as the resulting structure has to be a valid graph.
We therefore require the usual gluing conditions of DPO transformation
\cite[Def.\ 3.9]{g2006fundamentals}, namely
the \emph{identification condition} and the \emph{dangling condition}.
However, as $m$ is injective the \emph{identification condition} is trivially fulfilled.
In addition we add a \emph{parallel edge condition}, as we require the graphs to be simple.
We thus arrive at:
\begin{definition}[Valid Match]
    Let $p = \Rule{L}$ be a rule, let $G$ be a graph, and let $m\colon L \rightarrow G$ be a match of $p$ into $G$.

    Then the match $m$ is valid if the following conditions are satisfied:
    \begin{enumerate}[label=(\arabic*)]
	\item \textbf{Dangling Condition}:
		If a vertex $u \in V(L)$ is removed, $u \notin \codom(l)$, all the incident edges must also be removed:
		For each edge $(m(u), v') \in E(G)$ there must be an edge $(u, v) \in E(L)$ with $m(v) = v'$.%
		\label{itm:no_dangling_edge}
	\item \textbf{Parallel Edge Condition}: For any pair of vertices $v, w \in V(K)$ that the rule creates an edge between,
		$(l(v), l(w)) \notin E(L)$ and $(r(v),r(w)) \in E(R)$,
		the graph $G$ may not have an edge between them, $(m(l(v)), m(l(w))) \notin E(G)$.%
            \label{itm:no_parallel_edge}
    \end{enumerate}

    \label{def:valid_match}
\end{definition}

We refer to the application of a rule $p$ on a graph $G$ with a valid match $m\colon L \rightarrow G$ as a \emph{direct derivation} and denote it by $\Der{G}{p,m}{H}$, where $H$ is again a labeled simple graph~\cite{andersen2016software}.

Given a union graph $\mathbf{G}$ such that $\widetilde{\mathbf{G}} = G$, and a derivation $\Der{G}{p,m}{H}$,the match $m$ is also valid for an arbitrary extension $\mathbf{G} \cdot G'$, giving us the derivation $\Der{\widetilde{\mathbf{G} \cdot G'}}{p,m}{\widetilde{(H)\cdot G'}}$.
However, the extension graph $G'$ is irrelevant for the actual graph transformation.
This leads us to define a notion of a minimal graph into which a match can be embedded.
This notion is captured by what is known as a \emph{proper derivation}
\cite{strat:14} and the corresponding \emph{proper match}.
A match is proper if every connected component of the host graph $G$ intersects
the codomain of the match $m$.
A derivation is then proper if its match is proper.
We can express this notion very naturally using union graphs.

\begin{definition}[Proper Match]
    Let $\mathbf{G} = (G_1, \dots, G_n)$ be a union graph,
    let $\Rule{L}$ be a rule, and let $m\colon L \rightarrow \widetilde{\mathbf{G}}$ be a match.

    Then $m$ is proper if for each $i \in \{1, \dots, n\}$, there exist $v \in V(L)$ and $w \in V(G_i)$ such that $m(v) = (i, w)$.

    Given a valid proper match $m\colon L \rightarrow G$, the derivation $\Der{G}{p,m}{H}$ is called proper.
    \label{def:proper_match}
\end{definition}

\section{Enumerating Proper Derivations}
\label{sec:match_enumeration}

As stated in the introduction, the goal of this paper is to present an
algorithm to enumerate the possible direct derivations of a rule one can create by combining a set of given graphs. We can formalize this problem as follows.
\begin{problem}
	Let $\mathcal{G}$ be a finite set of pairwise non-isomorphic connected graphs and $p = \Rule{L}$ a rule.

	Enumerate all proper derivations
	\begin{align*}
	\Der{\widetilde{\mathbf{G}}}{p,m}{H}
	\end{align*}
	for all selections of cardinality $n\in \mathbb{N}$ and union graphs $\mathbf{G} \in \mathcal{G}^n$,
	but only up to reordering of the constituent graphs of each $\mathbf{G}$.
	\label{prob:derivations_for_graph_set}
\end{problem}
That is, we want to enumerate all proper derivations that can be obtained
by constructing the host graph as combinations of graphs in $\mathcal{G}$.
Clearly, the host graph obtained from a simple reordering of elements in a
combination of graphs in $\mathcal{G}$ is uninteresting, which is why we ignore such
reordering of the constituent graphs of each $\mathbf{G}$.
As we are only interested in proper derivations the cardinality $n$ is bounded by the number of connected components of $L$.

In the following we assume the left side graph $L$ of the rule $p$ corresponds to a union graph $L = \widetilde{\mathbf{L}}$.
This allows us to approach the problem by iterating over the component graphs $(L_1, \dots, L_k) = \mathbf{L}$
and accordingly extending the host graph $\mathbf{G}$.

More precisely, we utilize a notion of a \emph{partial match} which is simply a partial monomorphism of the rule left side into the host graph.
We iteratively extend a partial match with a monomorphism from a connected component of the left graph of a rule,
eventually obtaining a total match.
We formalize this in the notion of a \emph{partial match extension}, akin to the
union graph extension (Definition~\ref{def:graph_extension}).

\begin{definition}[Partial Match Extension]
	Let $\Rule{\widetilde{\mathbf{L}}}$ be a rule with $\mathbf{L} = (L_1, \dots, L_k)$, $\mathbf{G} = (G_1, \dots, G_n)$ a union graph,
	$i\in \{1,\dots, k\}$ and $x \in \{1, \dots, n\}$ two indices,
	and let $m\colon \widetilde{\mathbf{L}} \pmap \widetilde{\mathbf{G}}$ be a partial match.
	Finally, let $\varphi\colon L_i \rightarrow G_x$ be a monomorphism such
	that $L_i$ is undefined in $m$ and for all $v \in \codom(\varphi)$, $(x, v) \notin \codom(m)$.

Then the partial match extension is the partial match $m \mathbin{{\cup}_x^i} \varphi\colon \widetilde{\mathbf{L}} \pmap \widetilde{\mathbf{G}}$
of the rule $p$ into $\widetilde{\mathbf{G}}$ defined as follows:
\begin{align*}
	m \mathbin{{\cup}_x^i} \varphi = \left\{\begin{aligned}
		(i, v) &\mapsto (x, \varphi(v)) 	&\text{for all $v\in V(L_i)$}\\
		w &\mapsto m(w)					&\text{for all $w\in \dom(m)$}
	\end{aligned}\right.
\end{align*}

	To ease notation, we write simply $m \cup \varphi$ where the indices $i$ and $x$ are obvious from the context.

	\label{def:match_extension}
\end{definition}

The partial match extension thus lifts the monomorphism $\varphi$ from one of the component graphs into the union graph $\mathbf{G}$ itself.
The resulting partial morphism is always a partial monomorphism since $m$ is restricted to components other than $L_i$ and the codomain of $\varphi$ lifted to the union graph is disjoint with the codomain of $m$.
The match extension does not however guarantee that the resulting match
obtained from repeated extensions of a partial match is valid.

To ensure that any extension might potentially lead to a valid match we first
extend Definition~\ref{def:valid_match} of a valid match to partial matches.
This is done by limiting the scope of the definition to the domain of the match itself, indicated by the underlined sections.
Condition~\ref{itm:partial_no_dangling_edge} again corresponds to the dangling edge condition
while Condition~\ref{itm:partial_no_parallel_edge} corresponds to the parallel edge condition.

\begin{definition}[Valid Partial Match]
    Given a partial match $m\colon \widetilde{\mathbf{L}} \pmap G$ for a rule $p = \Rule{\widetilde{\mathbf{L}}}$ with $\mathbf{L} = (L_1, \dots, L_k)$
    on a graph $G$, we say that $m$ is valid if it satisfies the following
    criteria:
    \begin{enumerate}[label=(\arabic*)]
	\item \textbf{Dangling Condition}: For a vertex $u \in V(L)$
		\underline{defined in $m$} that is removed, $u \notin \codom(l)$,
		all the incident edges must also be removed:
		for each edge $(m(u), v') \in E(G)$ there must be an edge $(u, v) \in E(L)$ with $m(v) = v'$.%
		\label{itm:partial_no_dangling_edge}
	\item \textbf{Parallel Edge Condition}: For any pair of vertices $v, w \in
		V(K)$ that the rule creates an edge between,
		$(l(v), l(w)) \notin E(L)$ and $(r(v),r(w)) \in E(R)$,
		\underline{and are in the match, $l(v)$,} \underline{$l(w) \in \dom(m)$},
		the graph $G$ may not have an edge between them, $(m(l(v)), m(l(w))) \notin E(G)$.%
		\label{itm:partial_no_parallel_edge}
    \end{enumerate}

    \label{def:valid_partial_match}
\end{definition}

Clearly, any partial match obtained from a valid match is again valid.
As a result, we are interested in extensions that lead to valid partial matches.
\begin{definition}[Valid Partial Match Extension]
	Let $\Rule{\widetilde{\mathbf{L}}}$ be a rule with $\mathbf{L} = (L_1, \dots, L_k)$, $\mathbf{G} = (G_1, \dots, G_n)$ a union graph,
	$i\in \{1,\dots, k\}$ and $x \in \{1, \dots, n\}$ two indices,
	and let $m\colon \widetilde{\mathbf{L}} \hookrightarrow \widetilde{\mathbf{G}}$ be a valid partial match.
	Finally, let $\varphi\colon L_i \rightarrow G_x$ be a monomorphism such that $L_i$ is undefined in $m$.

	Then the extension $m' = m \cup_x^i \varphi$ is valid if it is defined and for any
	 vertex $v \in V(L_i)$ we have that $(i, v)\in V(\tug{L})$ satisfy the Dangling Condition of Definition \ref{def:valid_partial_match} in $m'$ and any pair $v,w$, where $w \in V(\tug{L})$, satisfy the Parallel Edge Condition of \ref{def:valid_partial_match} in $m'$.

	\label{def:valid_match_extension}
\end{definition}

Definition \ref{def:valid_match_extension} ensures that any valid partial match extension
is defined and produces another valid partial match by iterating over the vertices in $L_i$. The fact that $m'$ is valid results from the fact that $m$ is valid, and hence any
vertex not satisfying Definition \ref{def:valid_partial_match} must invariably involve
vertices of $L_i$. We note that for any extension to be valid, the corresponding monomorphism must itself be a valid partial match, and it thus suffices to only extend
these monomorphisms. However, even if both $m$ and $\varphi$ are valid partial matches,
their extension might not produce a valid partial match, as is depicted in Figure \ref{fig:invalid_partial_match}.

\begin{figure}
    \centering
    \begin{tikzpicture}[gCat, remember picture]
        \node[vCat, label=above:$\tug{L}$] (L) {
            \begin{tikzpicture}[smaller labels, remember picture]
              \node[modStyleGraphVertex, label={left:$(1, 1)$}] (l1) {C};
              \node[modStyleGraphVertex, label={right:$(2, 1)$}, right=of l1] (l2) {C};
            \end{tikzpicture}
        };
        \node[vCat, label=above:$K$, right=of L] (K) {
            \begin{tikzpicture}[remember picture]
              \node[modStyleGraphVertex] (k1) {C};
              \node[modStyleGraphVertex, right=of k1] (k2) {C};
            \end{tikzpicture}};
        \node[vCat, label=above:$R$, right=of K] (R) {
            \begin{tikzpicture}[remember picture]
              \node[modStyleGraphVertex] (r1) {C};
              \node[modStyleGraphVertex, right=of r1] (r2) {C};
              \draw[modStyleGraphEdge] (r1) to (r2);
            \end{tikzpicture}
        };
        \node[vCat, label=below:$\tug{G}$, below=of L] (G) {
            \begin{tikzpicture}[smaller labels, remember picture]
              \node[modStyleGraphVertex, label={left:$(1, 1)$}] (g1) {C};
              \node[modStyleGraphVertex, label={right:$(1, 2)$}, right=of g1] (g2) {C};
              \draw[modStyleGraphEdge] (g1) to (g2);
            \end{tikzpicture}
        };
        \node[vCat, label=below:$D\vphantom{\tug{G}}$, right=of G] (D) {
            \begin{tikzpicture}[remember picture]
              \node[modStyleGraphVertex] (d1) {C};
              \node[modStyleGraphVertex, right=of d1] (d2) {C};
              \draw[modStyleGraphEdge] (d1) to (d2);
            \end{tikzpicture}
        };
        \node[vCat, label=below:$H\vphantom{\tug{G}}$, right=of D] (H) {
            \begin{tikzpicture}[remember picture]
              \node[modStyleGraphVertex] (h1) {C};
              \node[modStyleGraphVertex, right=of h1] (h2) {C};
              \draw[modStyleGraphEdge, bend left=20] (h1) to (h2);
              \draw[modStyleGraphEdge, bend right=20] (h1) to (h2);
            \end{tikzpicture}
        };
        \path[eMorphism] (K) edge node [above] {$l$} (L);
        \draw[eMorphism] (K) edge node [above] {$r$} (R);
        \draw[eMorphism] (L) edge node [left] {$m$}  (G);
        \draw[eMorphism] (K) to (D);
        \draw[eMorphism] (R) to (H);
        \draw[eMorphism] (D) to (G);
        \draw[eMorphism] (D) to (H);

        \draw[eDotMorphism, bend right] (l1) to (g1);
        \draw[eDotMorphism, bend left] (l2) to (g2);
    \end{tikzpicture}%
    \caption{A rule application with the union graphs $\ug{L} = (L_1, L_2)$ and $\ug{G} = (G_1)$
    	corresponding to the connected components in $\tug{L}$ and $\tug{G}$ respectively.
    	The red arrows represent the two monomorphisms $\varphi_1\colon L_1 \rightarrow G_1$ and $\varphi_2\colon L_2 \rightarrow G_1$,
    	each individually a valid partial match.
    	The match $m = \varphi_1\cup_1\varphi_2$ and the corresponding final extension,
    	is however not valid because a parallel edge is created in $H$.
    	}
    \label{fig:invalid_partial_match}
\end{figure}
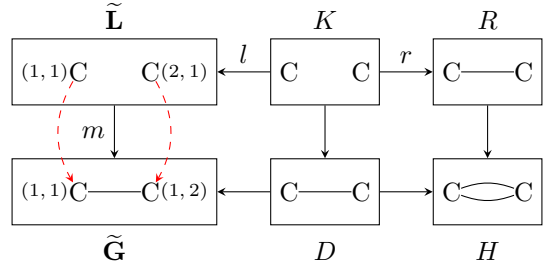

The partial match extension allows us to construct proper matches of a rule $p
= \Rule{\widetilde{\mathbf{L}}}$ into a graph $\widetilde{\mathbf{G}}$
iteratively over the component graphs of the left side union graph $\mathbf{L}
= (L_1, \dots, L_k)$.
Given a match $m\colon \tug{L} \rightarrow \tug{G}$, we let $m_i$
be the monomorphism mapping $L_i$ into some component graph of $\mathbf{G}$.

\begin{property}[Match Decomposition]
    Let $\mathcal{G}$ be a set of graphs and let $\Rule{\widetilde{\mathbf{L}}}$ be a rule with $\mathbf{L} = (L_1, \dots, L_k)$.
    Then, any proper match $m\colon \widetilde{\mathbf{L}} \rightarrow \widetilde{\mathbf{G}}$,
	where $\mathbf{G} \in {\mathcal{G}}^n$ can be rewritten as a
	sequence of valid partial match extensions $m = m_0 \cup m_1 \cup \dots \cup m_k$
	where $m_0\colon \widetilde{\mathbf{L}} \rightarrow \widetilde{\mathbf{G}}$ is the
	empty match of $\widetilde{\mathbf{L}}$ into $\widetilde{\mathbf{G}}$, $\dom(m_0) = \codom(m_0) = \emptyset$.

    \label{prop:match_decomposition}
\end{property}

We can always construct an empty match between any left side graph and any host graph. In what follows we thus assume the existence of the empty match implicitly and do not write it in the decomposition.

Following Property~\ref{prop:match_decomposition}, we know we can construct any proper match of a rule $\Rule{L}$ into a collection of graphs from a given set $\mathcal{G}$ by iteratively extending a partial match over the connected components of the left side graph $L$.
The iterative approach is useful in the context of Problem~\ref{prob:derivations_for_graph_set} as it allows us to iteratively construct a host union graph $\mathbf{G}$ alongside the match $m$.
Such iterative extension of the host graph restricts the possible order of the components of $\mathbf{G}$ by the order in which the connected components of the left side graph $L$ are matched.
However, the graph $\widetilde{\mathbf{G}}$ being a disjoint union of the component graphs,
any reordering of the components of $\mathbf{G}$ amounts only to a particular type of automorphism of $\widetilde{\mathbf{G}}$.

\begin{property}[Component Graph Order Independence]
	Let $\mathbf{G} = (G_1, \dots, G_n)$ and $\mathbf{H} = (H_1, \dots, H_n)$ be two union graphs
	such that there exists a bijection $\iota\colon \{1, \dots, n\} \rightarrow \{1, \dots, n\}$
	with $G_i = H_{\iota(i)}$ for all $i\in \{1, \dots, n\}$.
	Then, the two graphs $\widetilde{\mathbf{G}} \simeq \widetilde{\mathbf{H}}$ are isomorphic.

    \label{prop:order_independence}
\end{property}

Property~\ref{prop:order_independence} formalizes our earlier claim that we are only interested in one of the possible orderings of the components graphs in a union graph.
We can then solve Problem~\ref{prob:derivations_for_graph_set} by Algorithm~\ref{alg:enumerate_derivations}.
The algorithm enumerates valid matches recursively, interleaving match extensions with the host union graph $\mathbf{G}$ extensions.
Since each host graph extension is followed by a match extension into the new component, the constructed matches are guaranteed to be proper as long as each graph $G \in \mathcal{G}$ is connected.

The match extensions are always conducted in the order given by the component graphs $L_1, \dots, L_k$ of the left side graph of the rule.
The partial matches can then be arranged in a tree structure with the empty partial match $m_0\colon \emptyset \rightarrow \widetilde{()}$ in the root.
Then, any node on the $i$-$th$ level of the tree represents a partial match $m\colon \widetilde{\mathbf{L}} \rightarrow \widetilde{\mathbf{G}}$
for some $\mathbf{G} \in {\mathcal{G}}^n$ defined on $L_1, \dots, L_i$,
and any child of the node represents a possible extension $m \cup \varphi$, where $\varphi \colon  L_{i+1} \rightarrow G$ for some $G\in \mathcal{G}$.
The algorithm recursively navigates the above tree structure in a depth-first approach.
This is done by only keeping a single partial match and host union graph in memory, which we extend and shorten as necessary.

More precisely, the algorithm first computes all monomorphisms, that are also valid partial matches,
for each connected component $L_i\in \mathbf{L}$ (Line \ref{line:store_mono}).
The procedure $\textsc{EnumeratePartial}$ is then called recursively with the parameter $i$.
At each call, we check if $m$ is total, and hence a match, in which case we yield the corresponding direct derivation (Line \ref{line:yield_der}).
Otherwise, we enumerate all computed monomorphisms for $L_i$,
and for each $\varphi\colon L_i \rightarrow G$ we enumerate all possible positions of $G$ within $\mathbf{G}$ in which we try and extend $m$ (Line \ref{line:try_extend_1}).
In addition, we also enumerate the case where $\varphi$ is extended into a graph not yet in $\mathbf{G}$ (Line \ref{line:try_extend_2}).
Finally, in all cases, we remove $\varphi$ from $m$ when returning from the recursively called function, as well as removing $G$ from the end of $\mathbf{G}$ if it was extended.

\begin{thm}
Let $\mathcal{G}$ be a finite set of pairwise non-isomorphic graphs and $p = \Rule{L}$ a rule.
Then Algorithm~\ref{alg:enumerate_derivations} enumerates all proper derivations
as defined in Problem~\ref{prob:derivations_for_graph_set}.
\end{thm}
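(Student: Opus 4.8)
The plan is to establish the two directions of correctness separately, both by induction on the depth of the recursion tree of $\textsc{EnumeratePartial}$ described before the statement. First I would fix the loop invariant that at every entry into $\textsc{EnumeratePartial}(i)$ the maintained partial match $m\colon \tug{L} \pmap \tug{G}$ is a valid partial match (Definition~\ref{def:valid_partial_match}) defined exactly on the components $L_1, \dots, L_{i-1}$, and that the maintained host union graph $\ug{G}$ consists precisely of the components already touched by $m$.

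For \textbf{soundness}, I would show that this invariant is preserved. The base case is the empty match $m_0$ with the empty host union graph, which trivially satisfies the invariant. For the inductive step, the algorithm extends $m$ only by monomorphisms $\varphi\colon L_i \to G$ that are themselves valid partial matches (computed in Line~\ref{line:store_mono}), and retains an extension $m \cup_x^i \varphi$ only when it is a valid partial match extension in the sense of Definition~\ref{def:valid_match_extension}; by the discussion following that definition, such an extension is again a valid partial match defined on $L_1, \dots, L_i$. Because a fresh host component is appended (Line~\ref{line:try_extend_2}) only together with a match into it, every component of $\ug{G}$ stays touched. When the recursion reaches $i = k+1$ the match $m$ is total, hence a valid match by Definition~\ref{def:valid_match}, and properness (Definition~\ref{def:proper_match}) follows from the invariant; thus the derivation yielded in Line~\ref{line:yield_der} is a proper derivation with $\ug{G} \in \mc{G}^n$ for some $n \le k$, i.e., within the scope of Problem~\ref{prob:derivations_for_graph_set}.

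For \textbf{completeness}, I would take an arbitrary proper derivation $\Der{\tug{G}}{p,m}{H}$ with $\ug{G} \in \mc{G}^n$ and decompose its match, by Property~\ref{prop:match_decomposition}, as $m = m_1 \cup \dots \cup m_k$ where each $m_i\colon L_i \to G_{x_i}$ is a monomorphism into the component $G_{x_i}$. Each $m_i$ is a valid partial match, being a restriction of a valid match, so it appears among the monomorphisms enumerated for $L_i$ in Line~\ref{line:store_mono}. Using Property~\ref{prop:order_independence}, I would reorder $\ug{G}$ into the canonical order in which its components are first hit along $m_1, m_2, \dots$; this leaves $\tug{G}$ unchanged up to isomorphism and makes the ordering agree with the order in which the algorithm appends fresh components. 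I would then argue by induction on $i$ that the recursion reaches the node carrying the partial match $m_1 \cup \dots \cup m_i$ together with the corresponding prefix of $\ug{G}$: at that node the enumeration over monomorphisms includes $m_i$, and the enumeration over existing positions (Line~\ref{line:try_extend_1}) together with the fresh-component branch (Line~\ref{line:try_extend_2}) offers exactly the target position $x_i$. After $k$ extensions $m$ is reconstructed and the derivation is yielded.

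The main obstacle I anticipate is the bookkeeping in the completeness step that aligns the component indexing used by the decomposition with the indexing the algorithm produces: one must verify that the canonical first-touched ordering is well defined, that it is the unique ordering consistent with the fixed left-component order $L_1, \dots, L_k$, and that appending fresh components only at the end of $\ug{G}$ suffices to realize every $x_i$. Establishing this correspondence is also what justifies the ``up to reordering'' clause of Problem~\ref{prob:derivations_for_graph_set}, since distinct orderings of $\ug{G}$ collapse onto the single canonical representative the algorithm constructs.
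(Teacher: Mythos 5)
Your proposal is correct and follows essentially the same route as the paper's own (much terser) proof: completeness via the match decomposition of Property~\ref{prop:match_decomposition} and the observation that every constituent monomorphism survives the filtering of Line~\ref{line:store_mono}, and uniqueness up to reordering via Property~\ref{prop:order_independence} together with the fact that the fixed order of $L_1,\dots,L_k$ forces a single canonical ordering of the host components. Your added soundness invariant and the explicit first-touched-ordering bookkeeping are refinements of what the paper asserts as ``easy to see,'' not a different argument.
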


\begin{proof}
The algorithm constructs all possible combinations of the monomorphisms
$\varphi\colon L_i \rightarrow G$ for some $i \in \{1, \dots, k\}$ and $G \in
\mathcal{G}$, filtering them for validity.
As we are only interested in proper derivations, there is no need to extend the host graphs beyond the component graphs required by the monomorphisms themselves.
It is therefore easy to see that all proper matches of interest and the resulting derivations are enumerated at least once.

What remains to be shown is that each such match, respectively derivation, is
only enumerated for a single host graph.
The order of the component graphs of the host graphs constructed by Algorithm~\ref{alg:enumerate_derivations} is uniquely determined by the rule left side graph components and the match itself.
The order of the rule left side components being fixed, change in the order of the host graph components necessitates a change of the match itself.

The set $\mathcal{G}$ containing pairwise non-isomorphic graphs, two host graphs $\widetilde{\mathbf{G}} \simeq \widetilde{\mathbf{G'}} \in {\mathcal{G}}^n$ being isomorphic necessitates they only differ on the order of their component graphs, as the component graphs themselves have to be identical.
\end{proof}

\begin{algorithm}
\caption{Enumerate($\mathcal{G}, p=\Rule{\widetilde{\mathbf{L}} = (L_1, \dots, L_k)})$)}
 \label{alg:enumerate_derivations}
\begin{algorithmic}[1]
 \State $\mathbf{G} \gets$ the empty union graph
 \State $m \gets$ the empty partial match $\widetilde{\mathbf{L}} \pmap \widetilde{\mathbf{G}}$
    \ForEach{$i \in \{1, \dots, k\}$}
        \State $\mathcal{L}_i \gets \{\varphi\colon L_i \rightarrow G \mid G \in \mathcal{G} \wedge \varphi \text{ is a valid partial match}\}$ \label{line:store_mono}
        \If{$|\mathcal{L}_i|$ = 0}
            \State \Return
        \EndIf
    \EndFor
    \State $\textsc{EnumeratePartial}(1)$
 \State
 \Procedure{EnumeratePartial}{$i$}
    \If{$i = k+1$}\Comment{$m$ is total}
        \State \textbf{yield} $\Der{\widetilde{\mathbf{G}}}{p, m}{H}$ \label{line:yield_der}
        \State \Return
    \EndIf
    \ForEach{$\varphi\colon L_i \rightarrow G \in \mathcal{L}_i$}
        \ForEach{$x \in \{y\in\{1, \dots, n\}\ |\ G_y = G\}$}
            \If{$m \cup_x \varphi$ is valid}\label{line:try_extend_1}
            \Comment{try $\varphi$ on existing graphs}
            \State $m \gets m \cup_x \varphi$
                \State $\textsc{EnumeratePartial}(i+1)$
            \State remove $\varphi$ from $m$.
            \EndIf
        \EndFor
        \State $\mathbf{G} \gets \mathbf{G} \cdot G$
        \If{$m \cup_{|\mathbf{G}|} G$ is valid}\label{line:try_extend_2}
        \Comment{try $\varphi$ on a new graph}
            \State $m \gets m \cup_{|\mathbf{G}|} \varphi$
                \State $\textsc{EnumeratePartial}(i+1)$
            \State remove $\varphi$ from $m$
        \EndIf
        \State remove G from $\mathbf{G}$
    \EndFor
 \EndProcedure
\end{algorithmic}
\end{algorithm}

\section{Enumerating Non-isomorphic Derivations}\label{sec:non-iso-der}

Algorithm~\ref{alg:enumerate_derivations} introduced in Section~\ref{sec:match_enumeration} enumerates the possible direct derivations
one can create given a rule and a set of non-isomorphic graphs. Depending on
the problem domain, however, some of the enumerated derivations might be redundant. As an example, assume we are given a rule $p$, two direct
derivations $\Der{G_1}{m_1, p} {H_1}$, $\Der{G_2}{m_2, p} {H_2}$ enumerated by Algorithm \ref{alg:enumerate_derivations}, and
an isomorphism $\varphi$ between $G_1$ and $G_2$ such that $\varphi \circ m_1 =
m_2$. Then, clearly, the host graphs $H_1 \simeq H_2$ must also be isomorphic. Only one of the two derivations is therefore relevant in multiple application scenarios, such as generation of reaction networks~\cite{andersen2016software}.

To identify such redundant derivations, we utilize the concept of derivation isomorphism.
Such isomorphism necessarily depends on the building blocks of the derivation itself, that is the host graph, the rule and the match.
A simple graph isomorphism suffices for the host graph, we thus begin with isomorphism of the graph transformation rules.

\begin{definition}[Rule Isomorphism]
Let $p_1 = (L_1 \xleftarrow{l_1}{} K_1 \xrightarrow{r_1}{} R_1)$ and $p_2 =
(L_2 \xleftarrow{l_2}{} K_2 \xrightarrow{r_2}{} R_2)$ be two rules.
We say that a triplet of graph isomorphisms $\varphi= (\varphi_L\colon
L_1\rightarrow L_2, \varphi_K\colon K_1\rightarrow K_2, \varphi_R\colon R_1\rightarrow
R_2)$ is a rule isomorphism between $p_1$ and $p_2$ if the following diagram
commutes:
	\begin{center}
		\begin{tikzpicture}
			\node (L1) {$L_1$};
			\node[right=of L1] (K1) {$K_1$};
			\node[right=of K1]  (R1) {$R_1$};
			\node[below=of L1]  (L2) {$L_2$};
			\node[right=of L2]  (K2) {$K_2$};
			\node[right=of K2]  (R2) {$R_2$};

			\draw[eMorphism] (K1) to node[above] {$l_1$} (L1);
			\draw[eMorphism] (K1) to node[above] {$r_1$} (R1);

			\draw[eMorphism] (L1) to node[right] {$\varphi_L$} (L2);
			\draw[eMorphism] (K1) to node[right] {$\varphi_K$} (K2);
			\draw[eMorphism] (R1) to node[right] {$\varphi_R$} (R2);

			\draw[eMorphism] (K2) to node[below] {$l_2$} (L2);
			\draw[eMorphism] (K2) to node[below] {$r_2\vphantom{l}$} (R2);
		\end{tikzpicture}
	\end{center}

If $p_1 = p_2$ we say $\varphi$ is a rule automorphism of $p_1$.
The internal graph isomorphisms $\varphi_L$, $\varphi_K$, $\varphi_R$ of a rule automorphism $\varphi$ are necessarily also graph automorphisms.
\end{definition}

A rule automorphism perfectly preserves the transformation prescribed within
the rule itself. Thus, given a pair of valid matches $m$ and $m'$ such that $m'
= m \circ \alpha_L$ where $\alpha_L$ is the first component of a rule
automorphism $\alpha$, the derivations $\Der{G}{m, p} {H_1}$ and
$\Der{G}{m', p} {H_2}$ necessarily produce isomorphic results
$H_1\simeq H_2$. Here, $m$ and $m'$ are an example of what we refer to as
isomorphic matches. In general, isomorphic partial matches are defined as
follows.

\begin{definition}[Isomorphic Partial Matches]\label{def:iso-matches}
	Let $G_1$, $G_2$ be two graphs, $p = (L \xleftarrow{l}{} K	\xrightarrow{r}{} R)$ a rule,
	and $m_1\colon L \pmap G_1$, $m_2\colon L \pmap G_2$ two valid partial matches of $L$ into $G_1$ and $G_2$ respectively.

	We say the partial matches $m_1$ and $m_2$ are isomorphic, $m_1 \simeq m_2$, if there exists an
	automorphism $(\alpha_L, \alpha_K, \alpha_R)$ of the rule $p$ and a graph isomorphism $\varphi_G\colon G_1 \rightarrow G_2$ such that the following diagram commutes:

	\begin{center}
		\begin{tikzpicture}
			\node (L) {$L$};
			\node[left=of L] (L2) {$L$};

			\node[below=of L] (G) {$G_1$};
			\node[below=of L2] (G2) {$G_2$};

			\draw[eMorphism, -left to] (L) to node[right] {$m_1$} (G);
			\draw[eMorphism, -left to] (L2) to node[right] {$m_2$} (G2);

			\draw[eMorphism] (L) to node[above] {$\alpha_L$} (L2);
			\draw[eMorphism] (G) to node[below] {$\varphi_G$} (G2);
		\end{tikzpicture}
	\end{center}
\end{definition}

Building on rule isomorphisms, match isomorphisms also preserve the structure of the graph transformation, guaranteeing isomorphism of the target graphs.
Hence, we can extend this notion of structure preservation to direct derivations.

\begin{definition}[Isomorphic Derivations]
	Let $p = (L \xleftarrow{l}{} K \xrightarrow{r}{} R)$ be a rule, and
	$\Der{G_1}{p,m_1}{H_1}$ and $\Der{G_2}{p,m_2}{H_2}$ two proper derivations.
	Then the two derivations are isomorphic if $m_1 \simeq m_2$.
\end{definition}

By only considering non-isomorphic derivations Problem \ref{prob:derivations_for_graph_set} can be restated as:

\begin{problem}
	Let $\mathcal{G}$ be a finite set of pairwise non-isomorphic connected graphs and rule $p = \Rule{L}$ a rule.

	Then we want to enumerate all pairwise \emph{non-isomorphic} proper derivations
	\[
	    \Der{\widetilde{\mathbf{G}}}{p,m}{H}
	\]
	where $\mathbf{G} \in {\mathcal{G}}^n$ for any $n \in \mathbb{N}$.

	\label{prob:canon}
\end{problem}

Isomorphic matches, as per Definition~\ref{def:iso-matches}, defines an equivalence relation on the set of matches enumerated by Algorithm
\ref{alg:enumerate_derivations}.
The solution to Problem~\ref{prob:canon} thus entails enumerating exactly one representative match for each of the classes of said equivalence relation. Efficiently constructing such representatives
in an iterative manner, however, turns out to be nontrivial due to the complex interplay between rule and host graph automorphisms.

In the following, we propose a heuristic for pruning partial matches, which can
be determined to only extend into derivations isomorphic to ones already enumerated.
The heuristic aims to identify canonical representatives for each equivalence class based on a total order on matches.
The total order itself is composed of two subparts.
The first part imposes an order on the way we extend by monomorphisms of
isomorphic component graphs of the left side of the rule.
The second part prunes matches based on the automorphisms 
in a single connected component either of the left side of the rule or of the host graph.

The heuristic does not guarantee that we do not enumerate any isomorphic matches.
The details of such cases are described towards the end of this section.
Nonetheless, the results in Section~\ref{sec:experiments} show the heuristic provides a significant speedup to the algorithm when one is only interested in enumerating non-isomorphic derivations.

\subsection{Order-Preserving Match Extensions}
Consider a rule $p =\Rule{\widetilde{\mathbf{L}}}$ where $\mathbf{L} = (L_1, \dots L_k)$ and let $L_i, L_j \in \mathbf{L}$ such that $L_i\simeq L_j$.
Even though $L_i$ and $L_j$ are isomorphic, their semantics
might differ in $p$, i.e., the corresponding isomorphism does not exist in $R$. We say that
the graphs $L_i$ and $L_j$ are isomorphic with respect to $p$, denoted
$L_i\simeq_p L_j$, if there exists an automorphism
$(\alpha_L, \alpha_K, \alpha_R)$ of $p$ such that $\alpha_L(\widetilde{\mathbf{L}}_i) = \widetilde{\mathbf{L}}_j$.

Given such components $L_i$ and $L_j$ and a match $m$, we can always obtain a new match $m' \simeq m$ by swapping the monomorphisms used to map $L_i$ and $L_j$.
Formally, to be able to exchange the monomorphisms, we have to modulate them with the automorphism $\alpha_L$, respectively its inverse.
For simplicity of the notation, we instead assume, without loss of generality, that any components $L_i \mathbin{\simeq_p}L_j$ are representationally equivalent, $L_i=L_j$ for the remainder of the section.

The existence of two components $L_i \mathbin{\simeq_p}L_j$ automorphic with respect to the rule $p$ can thus force Algorithm~\ref{alg:enumerate_derivations} to enumerate isomorphic matches.
Several examples of such isomorphic matches are given in Figure~\ref{fig:order-example}.
Since the monomorphisms mapping the components $L_i$ and $L_j$ are exchangeable, it is the order of said monomorphisms that differentiates between the isomorphic matches.
In this section, we define a total order on the monomorphisms, which can be used to identify a canonical representative among the isomorphic matches produced by monomorphism swapping between rule automorphic components $L_i \mathbin{\simeq_p}L_j$.

\begin{figure}[tbp]
    \begin{subfigure}[b]{1\textwidth}
    \centering
    \begin{tikzpicture}[gCat]
        \node[vCat, label=above:$\tug{L}$] (L) {
            \begin{tikzpicture}[gDot]
              \node[vDot, label={above right:$(1,1)$}, label={below right:\phantom{$(1,1)$}}] (l1) {};
              \node[vDot2, label={above right:$(2, 1)$}, right=of l1] (l2) {};
              \node[vDot2, label={above right:$(3, 1)$}, right=of l2] (l3) {};
            \end{tikzpicture}
        };
        \node[vCat, label=above:$K$, right=of L] (K) {
            \begin{tikzpicture}
              \node[vDot] (l1) {};
              \node[vDot2, right=of l1] (l2) {};
              \node[vDot2, right=of l2] (l3) {};
            \end{tikzpicture}
            };
        \node[vCat, label=above:$R$, right=of K] (R) {
            \begin{tikzpicture}
              \node[vDot] (l1) {};
              \node[vDot2, right=of l1] (l2) {};
              \node[vDot2, right=of l2] (l3) {};
            \end{tikzpicture}
        };
        \path[eMorphism] (K) edge node [above] {$l$} (L);
        \draw[eMorphism] (K) edge node [above] {$r$} (R);
    \end{tikzpicture}%
    \caption{An identity rule with three connected components in $L$. Each component contains a single vertex,
    		either a filled \tikz\node[vDot] {}; type or a hollow \tikz\node[vDot2] {}; type.}\label{fig:order-rule}
    \end{subfigure}
    
    \begin{subfigure}[b]{1\textwidth}
    \centering
            \begin{tikzpicture}
              \node[vDot] (l1) {};
              \node[vDot2, right=of l1] (l2) {};
              \draw (l1) to node[below] {$g\vphantom{h}$} (l2);
            \end{tikzpicture}
            \quad
            \begin{tikzpicture}
              \node[vDot] (l1) {};
              \node[vDot2, right=of l1] (l2) {};
              \node[vDot2, right=of l2] (l3) {};
              \draw (l1) to (l2);
              \draw (l2) to (l3);
              \path (l1) to node[below] {$h\vphantom{g}$} (l3);
            \end{tikzpicture}
    \caption{Three graphs used for the examples below.
    		As example we consider the following total order for these graphs:
    		$g\prec h$.}\label{fig:order-graphs}
    \end{subfigure}
    
	\tikzset{eMorphismGood/.style={eMorphism, dashed, blue}}
	\tikzset{eMorphismBad/.style={eMorphism, dashed, red}}
	\tikzset{eGraph/.style={transform canvas={yshift=-0.5mm}}}
	\begin{subfigure}[b]{1\textwidth}
	\begin{minipage}{0.45\textwidth}
    \caption{Two matches onto the graph of $\mathbf{G} = (g, h, g)$.
    		The red match it is mapped to a copy of $h$ whereas $L_3$ is mapped to a copy of $g$.
    		As $g\prec h$ this red match violates Condition \ref{itm:graph_order} of Definition~\ref{def:monomorphism_order}.
    		\label{fig:order-1}}
	\end{minipage}\hfill
	\begin{minipage}{0.48\textwidth}
	\centering
	\begin{tikzpicture}[gCat, remember picture]
	\node[vCat, label=above:$\tug{L}$] (L) {
		\begin{tikzpicture}[gDot]
		\node[vDot, label={above right:$(1,1)$}, label={below right:\phantom{$(1,1)$}}] (l1) {};
		\node[vDot2, label={above right:$(2, 1)$}, right=of l1] (l2) {};
		\node[vDot2, label={above right:$(3, 1)$}, right=of l2] (l3) {};
		\end{tikzpicture}
	};
	\node[vCat, label=below:$\tug{G}$, below=of L] (G) {
		\begin{tikzpicture}[gDot]
		\node[vDot] (g1-1) {};
		\node[vDot2, right=of g1-1] (g1-2) {};
		\draw (g1-1) to node[below, minimum size=0] {\footnotesize $G_1 = g\vphantom{h}$} (g1-2);

         \node[vDot, right=1.5em of g1-2] (g2-1) {};
         \node[vDot2, right=of g2-1] (g2-2) {};
         \node[vDot2, right=of g2-2] (g2-3) {};
         \draw (g2-1) to (g2-2);
         \draw (g2-2) to (g2-3);
         \path (g2-1) to node[below, minimum size=0] {\footnotesize $G_2 = h\vphantom{h}$} (g2-3);
	              
		\node[vDot, right=1.5em of g2-3] (g3-1) {};
		\node[vDot2, right=of g3-1] (g3-2) {};
		\draw (g3-1) to node[below, minimum size=0] {\footnotesize $G_3=g\vphantom{g}$} (g3-2);
		\end{tikzpicture}
	};
	\foreach \l/\g/\b in {l1/g1-1/-15, l2/g3-2/0, l3/g2-2/0}
		\draw[eMorphismGood, bend left=\b] (\l) to (\g);
	\foreach \l/\g/\b in {l1/g1-1/15, l2/g2-2/0, l3/g3-2/0}
		\draw[eMorphismBad, bend left=\b] (\l) to (\g);
	\end{tikzpicture}%
	\end{minipage}
    \end{subfigure}
	\begin{subfigure}[b]{1\textwidth}
	\begin{minipage}{0.45\textwidth}
    \caption{Two matches onto the graph of $\mathbf{G} = (g, g)$.
	Both $L_2$ and $L_3$ are mapped to copies of $g$, but in the red match $L_2$ is mapped to a copy
	with a higher index in $\mathbf{G}$ than $L_3$ is, and the red match thus
	violates Condition \ref{itm:component_order} of Definition \ref{def:monomorphism_order}.
    \label{fig:order-2}}
	\end{minipage}\hfill
	\begin{minipage}{0.48\textwidth}
	\centering
	\begin{tikzpicture}[gCat, remember picture]
	\node[vCat, label=above:$\tug{L}$] (L) {
		\begin{tikzpicture}[gDot]
		\node[vDot, label={above right:$(1,1)$}, label={below right:\phantom{$(1,1)$}}] (l1) {};
		\node[vDot2, label={above right:$(2, 1)$}, right=of l1] (l2) {};
		\node[vDot2, label={above right:$(3, 1)$}, right=of l2] (l3) {};
		\end{tikzpicture}
	};
	\node[vCat, label=below:$\tug{G}$, below=of L] (G) {
		\begin{tikzpicture}[gDot]
		\node[vDot] (g1-1) {};
		\node[vDot2, right=of g1-1] (g1-2) {};
		\draw (g1-1) to node[below, minimum size=0] {\footnotesize $G_1 = g\vphantom{h}$} (g1-2);
	              
		\node[vDot, right=of g1-2] (g2-1) {};
		\node[vDot2, right=of g2-1] (g2-2) {};
		\draw (g2-1) to node[below, minimum size=0] {\footnotesize $G_2 = g\vphantom{h}$} (g2-2);
		\end{tikzpicture}
	};
	\foreach \l/\g/\b in {l1/g1-1/-15, l2/g1-2/0, l3/g2-2/0}
		\draw[eMorphismGood, bend left=\b] (\l) to (\g);
	\foreach \l/\g/\b in {l1/g1-1/15, l2/g2-2/0, l3/g1-2/0}
		\draw[eMorphismBad, bend left=\b] (\l) to (\g);
	\end{tikzpicture}
	\end{minipage}
    \end{subfigure}
	\begin{subfigure}[b]{1\textwidth}
	\begin{minipage}{0.45\textwidth}
    \caption{Two matches onto the graph of $\mathbf{G} = (g, x)$.
	The graphs $L_2$ and $L_3$ are mapped to the same graph in $\mathbf{G}$,
	but in the red match the vertex of $L_3$ is mapped to a vertex with a lower index in $x$
	than the vertex of $L_2$ is, and the red match thus
	violates Condition~\ref{itm:match_order} of Definition \ref{def:monomorphism_order}.
    \label{fig:order-3}}
	\end{minipage}\hfill
	\begin{minipage}{0.48\textwidth}
	\centering
	\begin{tikzpicture}[gCat, remember picture]
	\node[vCat, label=above:$\tug{L}$] (L) {
		\begin{tikzpicture}[gDot]
		\node[vDot, label={above right:$(1,1)$}, label={below right:\phantom{$(1,1)$}}] (l1) {};
		\node[vDot2, label={above right:$(2, 1)$}, right=of l1] (l2) {};
		\node[vDot2, label={above right:$(3, 1)$}, right=of l2] (l3) {};
		\end{tikzpicture}
	};
	\node[vCat, label=below:$\tug{G}$, below=of L] (G) {
		\begin{tikzpicture}[gDot]
		\node[vDot] (g1-1) {};
		\node[vDot2, right=of g1-1] (g1-2) {};
		\draw (g1-1) to node[below, minimum size=0] {\footnotesize $G_1 = g\vphantom{h}$} (g1-2);
	              
		\node[vDot, label={below left:$(1)$}, right=of g1-2] (g2-1) {};
		\node[vDot2, label={above right:$(2)$}, right=of g2-1] (g2-2) {};
		\node[vDot2, label={below right:$(3)$}, right=of g2-2] (g2-3) {};
		\draw (g2-1) to (g2-2);
		\draw (g2-2) to (g2-3);
		\path (g2-1) to node[below, minimum size=0] {\footnotesize $G_2 = x\vphantom{h}$} (g2-3);
		\end{tikzpicture}
	};
	\foreach \l/\g/\b in {l1/g1-1/-15, l2/g2-2/0, l3/g2-3/0}
		\draw[eMorphismGood, bend left=\b] (\l) to (\g);
	\foreach \l/\g/\b in {l1/g1-1/15, l2/g2-3/0, l3/g2-2/0}
		\draw[eMorphismBad, bend left=\b] (\l) to (\g);
	\end{tikzpicture}
	\end{minipage}
    \end{subfigure}
    \caption{Examples of order-preserving (blue) and non-order-preserving (red)
    matches for the rule \subref{fig:order-rule} and graphs in \subref{fig:order-graphs}.
    The connected components $L_2$ and $L_3$ of $\tug{L}$ are isomorphic with respect to the rule $p$,
    and each of Figure \subref{fig:order-1}, \subref{fig:order-2}, and \subref{fig:order-3} show different
    examples of how a match can be non-order-preserving, corresponding to the conditions of Definition \ref{def:monomorphism_order}.}
    \label{fig:order-example}
\end{figure}
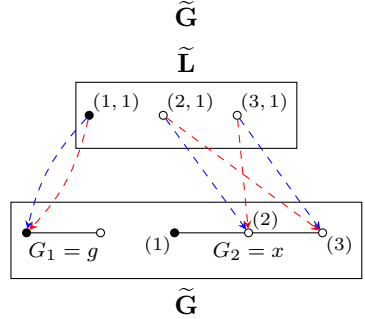

We construct the order in three steps.
First, we compare on the host graphs of the monomorphisms.
In case of equality, we compare on the indices of the target host graph components within the union graph.
And finally, if both monomorphisms map into the same component graph, we compare on the monomorphisms themselves, represented as vectors.

The first step requires us to fix a total order on the set of graphs $\mathcal{G}$. The set being finite means we can always find such an order, let thus ${\preceq} \subseteq \mathcal{G}\times \mathcal{G}$ be an arbitrary total order.
We can proceed with the definition of \emph{monomorphism order}.

\begin{definition}[Monomorphism Order]
	Let $p = \Rule{\widetilde{\mathbf{L}}}$ be a rule with $\mathbf{L} = (L_1, \dots, L_k)$ and $\mathbf{G} = (G_1, \dots, G_n)$ a union graph.
	Let further $\varphi\colon L_i \rightarrow G_x$, $\varphi'\colon L_j \rightarrow G_y$ be two monomorphisms mapping
	two isomorphic rule left side component graphs $L_i\mathbin{\simeq_p} L_j$
	into component graphs of $\mathbf{G}$.
	We say $\varphi < \varphi'$ if they satisfy one of the following conditions:
	\begin{enumerate}[label=(\arabic*)]
		\item $G_x \prec G_y$; \label{itm:graph_order}
		\item $G_x = G_y$ and $x < y$; \label{itm:component_order}
		\item $G_x = G_y$, $x = y$ and there exists $v \in \{1, \dots, |V(L_i)|\}$, $\varphi(v) < \varphi'(v)$ and for any $u < v$, $\varphi(u) = \varphi'(u)$, under the assumption of representation equivalence, $L_i =L_j$. \label{itm:match_order}
	\end{enumerate}

	\label{def:monomorphism_order}
\end{definition}

The monomorphism order defines a total order on monomorphisms into components of a given union graph.
The order is determined by the host component graphs of the monomorphisms (Conditions~\ref{itm:graph_order} and~\ref{itm:component_order}),
or as a lexicographical vector order on the images of their ordered domains in case the monomorphisms share a common host component graph (Condition~\ref{itm:match_order}).

The monomorphism order naturally extends into a total order $\ll$ on valid partial matches $m,m'\colon \mathbf{L} \pmap \mathbf{G}$ sharing a common domain,
$\dom(m) = \dom(m')$.
We say the match $m$ is smaller than $m'$, $m \ll m'$ if the vector of monomorphisms for each component of $\mathbf{L}$ is lexicographically smaller than the same vector for $m'$, formally $m \ll m'$ if there exists $i\in \{1,\dots, |\mathbf{L}|\}$, such that $m_i < m'_i$ and $m_j = m'_j$ for all $j < i$, .

We say that a match extension is \emph{order-preserving} if the monomorphism mapping the new component is larger than the monomorphisms mapping any preceding rule automorphic rule left side components.

\begin{definition}[Order-Preserving Match Extension]
    Let $m\colon \widetilde{\mathbf{L}} \pmap \widetilde{\mathbf{G}}$ be a valid partial match of a rule $p = \Rule{\widetilde{\mathbf{L}}}$ with $\mathbf{L} = (L_1, \dots, L_k)$
    into the graph $\widetilde{\mathbf{G}}$ with $\mathbf{G} = (G_1, \dots G_n)$.
    Let further $\varphi\colon L_i \rightarrow G_y$ be a monomorphism for some $i \in \{1, \dots, k\}$, $y \in \{1, \dots, n\}$ such that the extension $m \mathbin{\cup^i_y} \varphi$ is valid.

	We say that the extension $m \mathbin{\cup^i_y} \varphi$ is order-preserving, if for each $j \in \{1, \dots, i-1\}$ with $L_j \mathbin{\simeq_p} L_i$, $m_j < \varphi$.

    \label{def:order_extension}
\end{definition}

In simple terms, a partial match extension is order-preserving if the order on the matches of rule isomorphic components $L_i \mathbin{\simeq_p} L_j$ of the rule left side graph coincides with their order in $\mathbf{L}$.

Note that Definition~\ref{def:order_extension} only checks monomorphism order on component graphs $L_j \mathbin{\simeq_p} L_i$ in the prefix of the match, $j < i$.
This is in line with enumeration of partial matches by Algorithm~\ref{alg:enumerate_derivations}, which already follows the order of the component graphs of $\mathbf{L}$.
All partial matches enumerated by Algorithm~\ref{alg:enumerate_derivations} are thus defined on a prefix of $\mathbf{L}$.
This constraint is reflected in the definition of an \emph{order-preserving partial match}.

\begin{definition}[Order-Preserving Partial Match]
    Let $m\colon \widetilde{\mathbf{L}} \pmap \widetilde{\mathbf{G}}$ be a valid partial match of a rule $p = \Rule{\widetilde{\mathbf{L}}}$ with $\mathbf{L} = (L_1, \dots, L_k)$
    into the graph $\widetilde{\mathbf{G}}$ such that $m$ is defined exactly on $L_1, \dots, L_i$ for some $i \in \{0, \dots, k\}$

    Then $m$ is order-preserving if it can be decomposed into a series of order-preserving extensions $m = \varphi_1 \cup \dots \cup \varphi_i$.

	\label{def:order_match}
\end{definition}

Examples of order-preserving and non-order-preserving matches are depicted in
Figure \ref{fig:order-example}.

\begin{lem}
    Let $m\colon \widetilde{\mathbf{L}} \pmap \widetilde{\mathbf{G}}$ be a valid partial match of a rule $p = \Rule{\widetilde{\mathbf{L}}}$ with $\mathbf{L} = (L_1, \dots, L_k)$
    into the graph $\widetilde{\mathbf{G}}$ with $\mathbf{G} = (G_1, \dots G_n)$.

    Then there exists an order-preserving valid partial match $m'\colon \widetilde{\mathbf{L}} \pmap \widetilde{\mathbf{G}}$ such that $m \simeq m'$.

    \label{lem:order_match}
\end{lem}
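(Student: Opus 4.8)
The plan is to show that any valid partial match can be transformed into an order-preserving one within the same isomorphism class, by repeatedly repairing violations of the three conditions in Definition~\ref{def:monomorphism_order} via monomorphism swaps between rule-automorphic components. First I would observe that the key mechanism is the swap: given two components $L_i \mathbin{\simeq_p} L_j$ with $i < j$, there is a rule automorphism $(\alpha_L, \alpha_K, \alpha_R)$ with $\alpha_L(\tug{L}_i) = \tug{L}_j$, and by the representation-equivalence assumption $L_i = L_j$. Swapping the monomorphisms $m_i$ and $m_j$ (suitably modulated by $\alpha_L$) produces a new match $m'$, and the earlier discussion preceding Definition~\ref{def:iso-matches} guarantees $m' \simeq m$: the rule automorphism composed with the induced host-graph isomorphism $\varphi_G$ makes the required diagram commute. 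Thus swapping preserves the isomorphism class, and I only need to argue that a finite sequence of such swaps reaches an order-preserving match.

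The main step is a descent/minimality argument. I would consider the equivalence class $[m]_{\simeq}$ and select within it the match $m'$ that is minimal with respect to the total order $\ll$ on valid partial matches with the same domain (this order is well-defined since $\ll$ is a lexicographic extension of the total monomorphism order $<$, and the class is finite because $\mathcal{G}$ is finite and the domain is fixed). I claim this $\ll$-minimal $m'$ is order-preserving. Suppose not: then some extension in its canonical decomposition violates Definition~\ref{def:order_extension}, meaning there exist indices $j < i$ with $L_j \mathbin{\simeq_p} L_i$ but $m'_i < m'_j$ (rather than $m'_j < m'_i$). Swapping the monomorphisms on $L_i$ and $L_j$ then yields a match $m''$ that is strictly smaller in the $\ll$ order, since the first component at which $m'$ and $m''$ differ is $j$, where $m''_j = m'_i < m'_j$. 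By the swap lemma $m'' \simeq m' \simeq m$, contradicting minimality of $m'$. Hence $m'$ is order-preserving and isomorphic to $m$, as required.

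The hard part will be making the swap operation precise and checking it genuinely produces a valid partial match with the claimed order relationship. Two technical points need care. First, the swap must respect the distinction between the three cases of Definition~\ref{def:monomorphism_order}: when $L_i$ and $L_j$ map into distinct host components the swap is a relabeling of which $L$-component targets which $G$-component (Conditions~\ref{itm:graph_order}, \ref{itm:component_order}), whereas when they map into the same component graph the swap exchanges images of vertices within that component (Condition~\ref{itm:match_order}); I would verify that in each case the resulting map is still injective and label-preserving, hence a valid partial match, using that $L_i = L_j$ and that validity (Definition~\ref{def:valid_partial_match}) depends only on the image configuration, which is preserved up to the host isomorphism $\varphi_G$. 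Second, I must confirm that the comparison $m''_j < m'_j$ after a swap is genuinely strict and that no earlier component is disturbed, so that $m'' \ll m'$ holds; this follows because the swap only touches the monomorphisms on $L_i$ and $L_j$ and leaves all $m_\ell$ with $\ell < j$ untouched. With these verifications in place, the minimality argument closes the proof.
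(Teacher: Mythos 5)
Your proposal is correct and uses essentially the same argument as the paper: a violation of order-preservation at components $L_j \mathbin{\simeq_p} L_i$ is repaired by composing with the rule automorphism that swaps them, which yields an isomorphic match that is strictly smaller under $\ll$. The only difference is presentational — you phrase it as "the $\ll$-minimal element of the isomorphism class must be order-preserving," while the paper iterates the swap and invokes finiteness for termination; these are the same descent argument.
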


\begin{proof}
Assuming $m$ is not already order-preserving, there must exist two indices $j<i \in \{1, \dots, k\}$ such that $L_i \mathbin{\simeq_p} L_j$ and $m_j > m_i$.
By definition, $L_i \mathbin{\simeq_p} L_j$ guarantees that there exists a rule automorphism $\alpha_p = (\alpha_L, \alpha_K, \alpha_R)$
such that $\alpha_L(L_i) = L_j$, $\alpha_L(L_j) = L_i$ and $\alpha_L$ limited to any other component of $\mathbf{L}$ is an identity.
We can use the automorphism to obtain the partial match $m' = m \circ \alpha_L$ isomorphic to $m$.
As $\alpha_L$ effectively swaps the monomorphisms mapping $L_i$ and $L_j$ we know $m'_j = m_i < m_j$.
For any index $l$ other than $i$ and $j$ we have $m_l=m'_l$. In particular this
holds for any $l < j$ and thus $m' \ll m$.

If $m'$ is order-preserving, we are done.
Otherwise, we repeat the procedure for $m'$, the termination guaranteed by $\ll$ and the number of valid partial matches of $\widetilde{\mathbf{L}}$ into $\widetilde{\mathbf{G}}$ being finite.
\end{proof}

Following Lemma~\ref{lem:order_match}, it suffices to enumerate only order-preserving matches to solve Problem~\ref{prob:canon}.
Assuming that rule left side component graphs $L_i \mathbin{{\simeq}_p} L_j$ related by rule automorphism are established in a preprocessing step, checking for order-preserving partial matches reduces to a simple check of the three conditions given in Definition~\ref{def:monomorphism_order}.

\subsection{Minimal Match Extensions}

In the previous section, we discussed how order-preserving matches allow us to
deal with isomorphisms across multiple rule left side graph components. Recall
that by Property \ref{prop:order_independence}, the way that Algorithm
\ref{alg:enumerate_derivations} enumerates derivations we are ensured that
matches are enumerated only up to the reordering of the host graph. As a
result, by only enumerating order-preserving partial matches, we are enumerating 
derivations up to the reordering of the host graph as well as the reordering of isomorphic 
connected components of the left graph of a rule.

Order-preserving matches do not, however, consider isomorphic matches
obtained from automorphisms that permutes single component graphs of rule nor
host graphs.

As an example, assume we are given a valid match $m\colon \tug{L} \rightarrow
\tug{G}$, where $\ug{L} = (L_1)$ and $\ug{G} = (G_1)$, and an automorphism
$\alpha_{G_1}\colon G_1 \rightarrow G_1$. As $m$ consists of
a single connected component, clearly it is also order-preserving. However,
the match $m' = \alpha_{G_1} \circ m$ is isomorphic to $m$.

Given a rule automorphism $(\alpha_L, \alpha_K, \alpha_R)$, we let $\alpha_{L,i}
\subseteq \alpha_L$ refer to all the automorphisms of $\alpha_L$ that permutes
vertices of the $i$th connected component of $L$.

We refer to such automorphisms $\varphi_{L,i}$ as \emph{local rule automorphisms}.
For a local rule automorphism $\varphi_{L,i}$, the match $m' = m \circ
\varphi_{L,i}$ is by definition valid and isomorphic to $m$, $m \simeq m'$.

The automorphisms of the component graphs can themselves be exploited to avoid enumeration of such isomorphic (partial) matches.
As we are dealing with automorphisms of single component graphs, we can limit
ourselves to the context of the monomorphisms $m_i\colon L_i \rightarrow G_x$
mapping the individual component graphs of $\mathbf{L}$.
We let two monomorphisms be isomorphic if their obvious corresponding partial
matches are isomorphic, and
re-utilize the monomorphism order from Definition~\ref{def:monomorphism_order}.
As relevant monomorphisms will always map into the same component graph of the host graph, only Condition~\ref{itm:match_order} of Definition~\ref{def:monomorphism_order} is applicable.
This order corresponds to the lexicographic order on the vectors obtained by
applying each monomorphism on their ordered domain.
A total order on monomorphisms uniquely identifies the minimal and maximal elements for each subset, in particular all sets of pairwise isomorphic monomorphisms.
As we only need one, we choose the minimal element as the unique representative.

\begin{definition}[Minimal Monomorphism]
    Let $p = \Rule{\widetilde{\mathbf{L}}}$ be a rule with $\mathbf{L} = (L_1, \dots, L_k)$, $i \in \{1, \dots, k\}$ an index and $G \in \mathcal{G}$ a graph. Let further $\mathcal{A}_{L,i}$ be a set of local rule automorphisms on $L_i$, and $\mathcal{A}_G$ a set of graph automorphisms of $G$.

    Then we say a monomorphism $\varphi\colon L_i \rightarrow G$ is minimal with respect to $\mathcal{A}_{L,i}$ and $\mathcal{A}_G$ if for each $\varphi' \in \{\alpha_{G} \circ \varphi \circ \alpha_{L,i} \mid \alpha_{L,i} \in \mathcal{A}_{L,i} \text{ and } \alpha_G \in \mathcal{A}_G\}$, $\varphi \leq \varphi'$.

    \label{def:minimal_morphism}
\end{definition}

Definition~\ref{def:minimal_morphism} identifies a unique representative for any set of pairwise isomorphic monomorphisms delimited by the local automorphism set $\mathcal{A}_{L,i}$ and graph automorphism set $\mathcal{A}_G$.
In the general case, when all relevant automorphisms are included, the obtained minimal monomorphisms are pairwise non-isomorphic.
While such a configuration might appear to be the obvious choice, in the context of a match, several rule left side component graphs might be mapped into the same host component graph.
Consider two indices $i<j \in \{1,\dots, k\}$, with $m_i\colon L_i \rightarrow G_x$ for some $x\in \{1, \dots, n\}$ and a monomorphism $\varphi\colon L_j \rightarrow G_x$ into the same host component graph.
Then, for the extension $m \cup \varphi$, we are only interested in such isomorphisms of the monomorphism $\varphi$ which do not impact $\codom(m)$.
This can be achieved by taking $\mathcal{A}_{G,x}$ to be the set of all
automorphisms $\alpha_{G,x}\colon G_x\rightarrow G_x$ such that for each union graph vertex $(x,
v) \in \codom(m)$, $\alpha_{G,x}(v) = v$.
We denote such sets of automorphisms preserving $m$ as $\mathcal{A}_{G,x}[m] = \{\alpha_{G,x}\colon G_x\rightarrow G_x\mid \alpha_{G,x} \text{ is an automorphism and for all } (x, v) \in \codom(m),\,\alpha_{G,x}(v) = v\}$.
No restriction on the local rule automorphisms being necessary.

\begin{definition}[Minimal Match Extension]
    Let $m\colon \widetilde{\mathbf{L}} \pmap \widetilde{\mathbf{G}}$ be a valid partial match of a rule $p = \Rule{\widetilde{\mathbf{L}}}$ with $\mathbf{L} = (L_1, \dots, L_k)$ into a graph $\widetilde{\mathbf{G}}$ with $\mathbf{G} = (G_1, \dots, G_n)$.
    Let further $\varphi\colon L_i \rightarrow G_x$ be a monomorphism for some $i \in \{1, \dots, k\}$, $x \in \{1, \dots, n\}$ such that the extension $m \mathbin{\cup_x^i} \varphi$ is valid.

    We say that the extension $m \mathbin{\cup_x^i} \varphi$ is minimal, if $\varphi$ is minimal with respect to all local rule automorphisms of $L_i$ and $\mathcal{A}_{G,x}[m]$.

    \label{def:minimal_extension}
\end{definition}

As an extension of Definition~\ref{def:minimal_extension}, we say a valid partial match $m\colon \widetilde{\mathbf{L}} \pmap \widetilde{\mathbf{G}}$ defined on $L_1, \dots, L_j$ for some $j \in \{0, \dots, k\}$, is \emph{minimal} if it can be decomposed into a series of extensions $m = \varphi_1 \cup \dots \cup \varphi_j$ such that for each $1 \leq i \leq j$, the extension $(\varphi_1 \cup \dots \cup \varphi_{i-1}) \cup \varphi_i$ is minimal.

\begin{lem}
    Let $m\colon \widetilde{\mathbf{L}} \pmap \widetilde{\mathbf{G}}$ be a valid partial match of a rule $p = \Rule{\widetilde{\mathbf{L}}}$ with $\mathbf{L} = (L_1, \dots, L_k)$
    into the graph $\widetilde{\mathbf{G}}$ with $\mathbf{G} = (G_1, \dots, G_n)$.

    Then there exists a minimal partial match $m'\colon \widetilde{\mathbf{L}} \pmap \widetilde{\mathbf{G}}$ such that $m \simeq m'$.

    \label{lem:minimal_match}
\end{lem}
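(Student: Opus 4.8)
The plan is to mirror the proof of Lemma~\ref{lem:order_match}, but to replace the swap of monomorphisms between rule-automorphic components with the \emph{local} minimisation prescribed by Definition~\ref{def:minimal_extension}. First I would fix a decomposition $m = \varphi_1 \cup \dots \cup \varphi_j$ of the given valid partial match (reading off $\varphi_l = m_l$; each prefix extension is valid because restrictions of a valid match are valid), and then process the components from left to right, at each index replacing the current monomorphism by the minimal element of its orbit and applying the corresponding automorphisms to the \emph{whole} match so as to remain inside the isomorphism class of $m$.

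The key steps would run as follows. Suppose the prefix $\varphi_1 \cup \dots \cup \varphi_{i-1}$ has already been made minimal and write $m^{(i-1)}$ for it, with $\varphi_i\colon L_i \rightarrow G_x$. If $\varphi_i$ is already minimal with respect to the local rule automorphisms of $L_i$ and $\mathcal{A}_{G,x}[m^{(i-1)}]$, I leave it untouched. Otherwise Definition~\ref{def:minimal_morphism} supplies a local rule automorphism $\alpha_{L,i}$ and an automorphism $\alpha_{G,x} \in \mathcal{A}_{G,x}[m^{(i-1)}]$ for which $\varphi_i^{*} = \alpha_{G,x} \circ \varphi_i \circ \alpha_{L,i}$ is the minimal element of the orbit; I then replace the entire match by $m' = \alpha_{G,x} \circ m \circ \alpha_{L,i}$, reading $\alpha_{L,i}$ as the rule automorphism that is the identity outside $L_i$ and $\alpha_{G,x}$ as the host automorphism that is the identity outside $G_x$. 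By the remarks preceding the lemma (applying a local rule automorphism yields an isomorphic valid match), together with the fact that composing with a host-graph automorphism also yields an isomorphic valid match, we get $m' \simeq m$ and $m'$ valid.

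The crucial verification is that this step does not disturb the already-minimised prefix. Since $\alpha_{L,i}$ acts as the identity on every component other than $L_i$, it changes only $\varphi_i$; and since $\alpha_{G,x} \in \mathcal{A}_{G,x}[m^{(i-1)}]$ fixes every vertex of $\codom(m^{(i-1)})$ pointwise, the monomorphisms $\varphi_1, \dots, \varphi_{i-1}$ are left unchanged. Hence $m^{(i-1)}$ stays fixed, and $\varphi_i^{*}$ is minimal with respect to exactly the prefix required by Definition~\ref{def:minimal_extension}. Moreover, any automorphism applied at a later index $l > i$ for the same host component $G_x$ lies in $\mathcal{A}_{G,x}[m^{(l-1)}]$, which fixes $\codom(\varphi_i^{*})$, so the minimality achieved at step $i$ persists to the end. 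After the $j$-th step the resulting match decomposes into a series of minimal extensions, i.e.\ it is minimal in the sense following Definition~\ref{def:minimal_extension}, and it is isomorphic to $m$ because $\simeq$ is an equivalence relation (as noted after Definition~\ref{def:iso-matches}) and every step stayed within the class of $m$.

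The main obstacle I anticipate is the bookkeeping around the global effect of $\alpha_{G,x}$: applying it acts on the whole host component $G_x$ and therefore simultaneously alters the monomorphisms of any later components that also map into $G_x$. The point to get right is that these alterations are harmless, because such later components are re-minimised at their own step, whereas the restriction $\alpha_{G,x} \in \mathcal{A}_{G,x}[m^{(i-1)}]$ is precisely what keeps the earlier, already-finalised components fixed. Making this non-interference argument watertight—so that the single left-to-right pass genuinely yields one decomposition all of whose extensions are minimal—is the heart of the proof; the validity and isomorphism claims themselves reduce directly to the statements established just before the lemma. Unlike Lemma~\ref{lem:order_match}, no separate termination argument via $\ll$ is needed here, since the construction is a finite sequence of $j$ independent minimisations.
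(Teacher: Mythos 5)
Your proof is correct, and it reaches the conclusion by a slightly different route than the paper. The paper's proof is a descent argument: it locates \emph{some} index $i$ at which the extension $(m_1 \cup \dots \cup m_{i-1}) \cup m_i$ fails to be minimal, uses a local rule automorphism of $L_i$ together with a host automorphism in $\mathcal{A}_{G,x}[m_1 \cup \dots \cup m_{i-1}]$ to produce an isomorphic match that is strictly smaller under $\ll$ (the prefix $m_1,\dots,m_{i-1}$ being fixed, exactly as in your non-interference argument), and then terminates by appealing to $\ll$ being a total order on the finite set of valid partial matches --- mirroring the structure of the proof of Lemma~\ref{lem:order_match}. You instead make a single deterministic left-to-right pass, at each step replacing $\varphi_i$ by the minimum of its orbit under $\mathcal{A}_{L,i} \times \mathcal{A}_{G,x}[m^{(i-1)}]$; this trades the termination argument for the (correctly identified and correctly discharged) obligation that later steps cannot disturb earlier minimality, which follows because $\alpha_{L,l}$ is the identity outside $L_l$ and every host automorphism used at step $l>i$ fixes $\codom(m^{(l-1)}) \supseteq \codom(\varphi_i^{*})$ pointwise. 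Your version is arguably more constructive (it exhibits the witness in exactly $j$ steps), while the paper's is shorter to state because it reuses the descent template already set up for Lemma~\ref{lem:order_match}. Both arguments share the same implicit assumption, namely that $\mathcal{A}_{L,i}$ and $\mathcal{A}_{G,x}[m^{(i-1)}]$ are closed under composition and inverses, so that the orbit minimum is itself minimal in the sense of Definition~\ref{def:minimal_morphism} and the accumulated composite of all the per-step automorphisms is a single rule automorphism paired with a single host-graph automorphism witnessing $m \simeq m'$ as in Definition~\ref{def:iso-matches}.
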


\begin{proof}
Assuming $m$ is not already minimal, there must exist an index $i \in \{1, \dots, k\}$ such that the extension $(m_1 \cup \dots \cup m_{i-1}) \cup m_i$ is not minimal.
By definition, there exist a rule automorphism $(\alpha_L, \alpha_K, \alpha_R)$ of $p$ such that $\alpha_L \in \mathcal{A}_{L,i}$ is an automorphism of $L_i$, and a host graph automorphism $\alpha_G \in \mathcal{A}_{G,x}[m_1 \cup \dots \cup m_{i-1}]$, which allow us to construct $\varphi = \alpha_G \circ m_i \circ \alpha_L$ such that $\varphi < m_i$.
We can use the automorphisms to obtain the partial match $m' = \alpha_G \circ m \circ \alpha_L$ isomorphic to $m$.
$\alpha_L$ only interacts with the rule left side graph component $L_i$ and $\alpha_G \in \mathcal{A}_{G,x}[m_1 \cup \dots \cup m_{i-1}]$ guarantees that for all $j < i$, $m'_j = m_j$.
Moreover, $\varphi = m'_i < m_i$ and thus $m' \ll m$.

If $m'$ is minimal, we are done. Otherwise, we repeat the procedure for $m'$, the termination guaranteed by $\ll$ and the number of valid partial matches of $\widetilde{\mathbf{L}}$ into $\widetilde{\mathbf{G}}$ being finite.
\end{proof}

Lemma~\ref{lem:minimal_match} guarantees that for any enumerated valid match we also enumerate at least one isomorphic match which is also minimal.
The solution to Problem~\ref{prob:canon} can thus be obtained by only enumerating minimal matches.
To be able to use Lemma~\ref{lem:minimal_match} in conjunction with Lemma~\ref{lem:order_match} stating that order-preserving partial matches are sufficient, we still have to prove that for any valid partial match there exists an isomorphic match that is both order-preserving and minimal.

\begin{lem}
	Let $m\colon \widetilde{\mathbf{L}} \pmap \widetilde{\mathbf{G}}$ be a valid partial match of a rule $p = \Rule{\widetilde{\mathbf{L}}}$ with $\mathbf{L} = (L_1, \dots, L_k)$ into the graph $\widetilde{\mathbf{G}}$ with $\mathbf{G} = (G_1, \dots, G_n)$.
	Then there exists a minimal and order-preserving partial match $m'\colon \widetilde{\mathbf{L}} \pmap \widetilde{\mathbf{G}}$ such that $m \simeq m'$.

\label{lem:minimal_order_match}
\end{lem}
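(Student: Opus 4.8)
The plan is to obtain $m'$ by composing the two reduction procedures used to prove Lemma~\ref{lem:order_match} and Lemma~\ref{lem:minimal_match}, exploiting that both are decreasing with respect to the same well-founded order $\ll$. A direct sequential application — first make $m$ order-preserving, then make it minimal — does not obviously work, since minimization may reintroduce an order violation and an order swap may break minimality. Instead I would interleave the two reductions and argue termination through the common monovariant $\ll$.

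First I would record the two observations that both proofs already supply. Whenever a valid partial match $m$ fails to be order-preserving, the swap $m \mapsto m \circ \alpha_L$ from the proof of Lemma~\ref{lem:order_match} yields a valid partial match with the same domain, isomorphic to $m$, and strictly smaller in $\ll$. Likewise, whenever $m$ fails to be minimal, the replacement $m \mapsto \alpha_G \circ m \circ \alpha_L$ from the proof of Lemma~\ref{lem:minimal_match} yields a valid partial match with the same domain, isomorphic to $m$, and strictly smaller in $\ll$. In both cases the new match is obtained by precomposition with a rule automorphism and postcomposition with a host-graph automorphism, so by Definition~\ref{def:iso-matches} and transitivity of $\simeq$ it stays in the isomorphism class of the original $m$; and because the domain is preserved, $\ll$ remains well-defined throughout.

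With these in hand, I would define the following procedure starting from $m$: while the current match is not order-preserving, apply the order swap; otherwise, if it is not minimal, apply the minimization step; otherwise stop. Each iteration strictly decreases the value of $\ll$, and since the set of valid partial matches of $\widetilde{\mathbf{L}}$ into $\widetilde{\mathbf{G}}$ with domain $\dom(m)$ is finite and $\ll$ is a total order on it, the procedure cannot decrease indefinitely and must halt. It can only halt at a match $m'$ for which neither reduction is applicable, i.e.\ at a match that is simultaneously order-preserving and minimal, and $m' \simeq m$ by the preservation of the isomorphism class at every step.

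The main obstacle is precisely the interaction noted at the outset: showing that interleaving the two reductions cannot loop. This is handled uniformly by verifying that \emph{both} operations are strictly decreasing for the \emph{single} order $\ll$ (rather than for two separate orders), so that $\ll$ serves as a common monovariant. The only points needing care are that each reduction leaves $\dom(m)$ unchanged — so that $\ll$ keeps comparing matches on a common domain — and that composing with the relevant automorphisms never leaves the $\simeq$-class, both of which are immediate from the constructions in the proofs of Lemmas~\ref{lem:order_match} and~\ref{lem:minimal_match}.
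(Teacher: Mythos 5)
Your proposal is correct and follows essentially the same route as the paper: both arguments observe that the order-swap of Lemma~\ref{lem:order_match} and the minimization step of Lemma~\ref{lem:minimal_match} each produce an isomorphic match that is strictly smaller under the single total order $\ll$, and conclude that alternating the two reductions over the finite set of valid partial matches must terminate in a fixed point that is simultaneously order-preserving and minimal. Your write-up is somewhat more explicit than the paper's about why naive sequential application is insufficient and about $\ll$ serving as a common monovariant on a fixed domain, but the underlying argument is the same.
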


\begin{proof}
	Let us first assume $m$ is not order-preserving.
	By Lemma~\ref{lem:order_match}, there exists an isomorphic order-preserving match $m' \simeq m$ which is additionally smaller than $m$ according to the lexicographic order on monomorphism vectors, $m' \ll m$.
	Similarly, if $m$ is not minimal, by Lemma~\ref{lem:minimal_match}, there exists an isomorphic minimal match $m'' \simeq m$ which is also smaller, $m'' \ll m$.
	
	As $\ll$ is a total order and there are finitely many partial matches from $\widetilde{\mathbf{L}}$ into $\widetilde{\mathbf{G}}$, the consecutive application of Lemma~\ref{lem:order_match} to obtain an order-preserving match and Lemma~\ref{lem:minimal_match} to obtain a minimal match must necessarily terminate in a fixed-point which is both order-preserving and minimal.
\end{proof}

Lemma~\ref{lem:minimal_order_match} allows us to obtain a solution to Problem~\ref{prob:canon} by only enumerating order-preserving minimal partial matches.
It should be noted, however, that two order-preserving minimal matches can still be isomorphic, as illustrated in Figure \ref{fig:invalid_iso_match}, making the method introduced in this section a heuristic.
This is due to the limitation to host component graph automorphisms which fix the partial match $m$, $\mathcal{A}_{G,x}[m]$, in Definition~\ref{def:minimal_extension}.

\begin{figure}[tbp]
    \centering
    \begin{subfigure}[b]{0.48\textwidth}
    \centering
    \begin{tikzpicture}[gCat, remember picture]
        \node[vCat, label=above:$\tug{L}$] (L) {
            \begin{tikzpicture}[gDot, remember picture]
              \node[vDot2, label={above left:$(1, 1)$}] (l1) {};
              \node[vDot2, label={above:$(2, 1)$}, right=of l1] (l2) {};
              \node[vDot2, label={above right:$(3, 1)$}, right=of l2] (l3) {};
            \end{tikzpicture}
        };
        \node[vCat, label=below:$\tug{G}$, below=of L] (G) {
            \begin{tikzpicture}[gDot, remember picture]
              \node[vDot2, label={below left:$1$}] (g1) {};
              \node[vDot2, label={below right:$2$}, right=of g1] (g2) {};
              \node[vDot2, label={below left:$3$}, below=of g1] (g3) {};
              \node[vDot2, label={below right:$4$}, right=of g3] (g4) {};
			 \draw (g1) to (g2);
			 \draw (g3) to (g4);
			 \draw (g1) to (g3);
			 \draw (g2) to (g4);
            \end{tikzpicture}
        };

        \draw[eMorphism, red, dashed, bend left] (l1) to (g1);
        \draw[eMorphism, red, dashed, bend left] (l2) to (g2);
        \draw[eMorphism, blue, dashed, bend right] (l1) to (g1);
        \draw[eMorphism, blue, dashed, bend right] (l2) to (g2);

		\draw[eMorphism, red, dashed, bend left] (l3)  to (g4);
        \draw[eMorphism, blue, dashed, bend right] (l3) to (g3);
    \end{tikzpicture}%
    \caption{
    		Example with no edges in $\tug{L}$, and all vertices having the same type.
    		The three connected components of $\tug{L}$ are pairwise isomorphic.
    }\label{fig:invalid_iso_match:no_edge}
    \end{subfigure}\hfill
    \begin{subfigure}[b]{0.48\textwidth}
    \centering
    \begin{tikzpicture}[gCat, remember picture]
        \node[vCat, label=above:$\tug{L}$] (L) {
            \begin{tikzpicture}[gDot, remember picture]
              \node[vDot2, label={above left:$(1, 1)$}] (l1) {};
              \node[vDot2, label={above:$(1, 2)$}, right=of l1] (l2) {};
              \node[vDot, label={above right:$(2, 1)$}, right=of l2] (l3) {};
		  \draw (l1) to (l2);
            \end{tikzpicture}
        };
        \node[vCat, label=below:$\tug{G}$, below=of L] (G) {
            \begin{tikzpicture}[gDot, remember picture]
              \node[vDot2, label={below left:$1$}] (g1) {};
              \node[vDot2, label={below right:$2$}, right=of g1] (g2) {};
              \node[vDot, label={below left:$3$}, below=of g1] (g3) {};
              \node[vDot, label={below right:$4$}, right=of g3] (g4) {};
			 \draw (g1) to (g2);
			 \draw (g3) to (g4);
			 \draw (g1) to (g3);
			 \draw (g2) to (g4);
            \end{tikzpicture}
        };

        \draw[eMorphism, red, dashed, bend left] (l1) to (g1);
        \draw[eMorphism, red, dashed, bend left] (l2) to (g2);
        \draw[eMorphism, blue, dashed, bend right] (l1) to (g1);
        \draw[eMorphism, blue, dashed, bend right] (l2) to (g2);

		\draw[eMorphism, red, dashed, bend left] (l3)  to (g4);
        \draw[eMorphism, blue, dashed, bend right] (l3) to (g3);
    \end{tikzpicture}%
    \caption{
    		Example with a single edge in $\tug{L}$, and two different vertex types.
    		The connected components of $\tug{L}$ are not isomorphic, but $L_1$ has a symmetry.
	}\label{fig:invalid_iso_match:edge}
    \end{subfigure}%
	\caption{Examples where two isomorphic matches are both minimal and order-preserving.
		The vertices of $\tug{L}$ are annotated with graph index and vertex index,
		while the vertices of $\tug{G}$ are annotated with just their vertex index.
		In each example one match is $m$ (blue) and the other $m'$ (red).
		In \subref{fig:invalid_iso_match:no_edge},
		let $\varphi_1$ and $\varphi_2$ represent the corresponding monomorphisms mapping $L_1$ and $L_2$ into $G_1$,
		and let $\varphi_3$ (resp.\ $\varphi'_3$) be the constituent monomorphism for $m$ (resp.\ $m'$) mapping $L_3$ into $G_1$.
		Then, we can write each match as a sequence of extensions, $m = \varphi_1 \cup \varphi_2 \cup \varphi_3$
		and $m' = \varphi_1 \cup \varphi_2 \cup \varphi'_3$.
		Clearly, each extension is order-preserving and hence both matches are order-preserving.
		To see why both are minimal matches, observe that $G_1$ contains the local automorphism
		$\alpha_{G_1} = \{1\mapsto 2, 2\mapsto 1, 3\mapsto 4, 4\mapsto 3\}$.
		However, $\alpha_{G_1}\not\in \mathcal{A}_{G,1}[\varphi_1\cup\varphi_2]$,
		and both $\varphi_3$ and $\varphi'_3$ are minimal match extensions.
		Only when the automorphism $\alpha_{G_1}$ is used in conjunction with the automorphism that swaps $L_1$ and $L_2$
		one can detect that $m$ and $m'$ are in fact isomorphic.
		In \subref{fig:invalid_iso_match:edge} the same issue occurs, though here the symmetry of $\tug{L}$ is within a single connected component.
	}
	\label{fig:invalid_iso_match}
\end{figure}
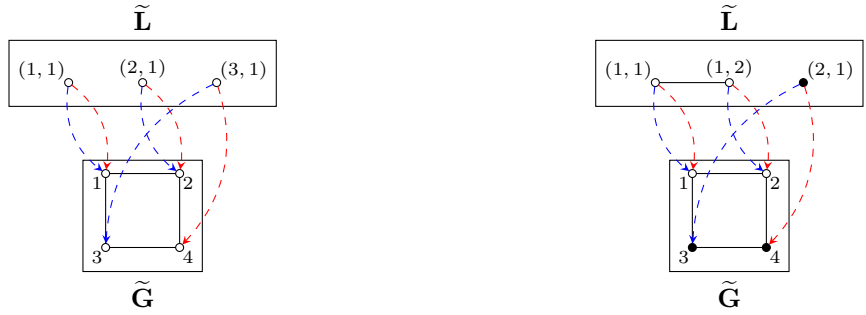

In practice, more automorphisms could be considered to obtain less minimal matches, if they commute with a rule automorphism on the match $m$.
In particular, consider a graph automorphism $\alpha_{G,x}\colon G_x
\rightarrow G_x$ which does not satisfy the condition $\alpha_{G,x}(v) = v$
for each union graph vertex $(x, v) \in \codom(m)$, but for which there exists a rule automorphism with
$\alpha_L\colon \widetilde{\mathbf{L}} \rightarrow \widetilde{\mathbf{L}}$
such that for each $u \in \dom(m)$ such that $m(u) = (x, v)$ for some $v \in
V(G_x)$, $m(u) = \alpha_{G,x}\circ m\circ \alpha_L(u)$ as shown in Figure
\ref{fig:invalid_iso_match}.
The left side of the commuting rule automorphism $\alpha_L$ allows us to ``reverse'' the change
introduced to $m$ by $\alpha_{G,x}$ thus preserving the isomorphism of the
results, $m \mathbin{\cup_x} (\alpha_{G,x} \circ \varphi) = \alpha_{G_x}
\circ (m \mathbin{\cup_x} \varphi) \circ \alpha_p$.

Detection of automorphisms $\alpha_{G,x}\colon G_x \rightarrow G_x$ with a
commuting rule automorphism is nontrivial, as it in the general case requires
enumeration over all rule automorphisms, including non-local
ones.
In spirit of avoiding explicit enumeration of all rule automorphisms, whose
number can be up to exponentially larger than the number of local rule
automorphisms, we limit ourselves to automorphisms that fix the match $m$.
While an efficient discovery of automorphisms $\alpha_{G,x}$ for which there
exists a commuting rule automorphisms might be possible, as our results in
Section~\ref{sec:experiments} showcase, using a limited amount of symmetries is
sufficient to obtain significant speedup.
We thus consider such exploration to be beyond the scope of this contribution.

Finally, the implementation of the symmetry pruning can be done by a minor modification of Algorithm~\ref{alg:enumerate_derivations}.
In particular, it suffices to check the minimality and order-preservence of the extension on top of
the validity check on lines~\ref{line:try_extend_1} and~\ref{line:try_extend_2}.
Additionally, since the minimality of a monomorphism is always determined with
respect to all local rule automorphisms, we can pre-filter the
monomorphisms computed on line~\ref{line:store_mono} to only include ones that are minimal
with respect of all local rule automorphisms and the identity automorphism of
the host graph, $\mathcal{A}_G = \{\id_G\}$.


\section{Implementation}
The algorithm described in the above sections, \emph{Efficient Derivation Enumeration} (EDE), is implemented in C++ as part of the software framework MØD \cite{andersen2016software}, available online: \url{https://github.com/jakobandersen/mod/tree/archive/rule-application-21}.
The MØD integration allows the algorithm to be easily employed for efficient reaction network construction.
Moreover, the algorithm relies on the solutions of several well-known problems, many already implemented in MØD.

First, \EDE{} constructs a database of monomorphisms between connected components in the rule and the given connected input graphs.
To enumerate these monomorphisms we employ the VF2 algorithm \cite{vf2} as implemented in the Boost Graph library \cite{boost}.
The reaction network is constructed by including all discovered, pairwise non-isomorphic derivations.
To check for isomorphisms between derivations, we use the graph canonicalization framework \cite{generic-graphcanon}
as implemented in the GraphCanon library \cite{gh:graphCanon}, allowing for efficient comparison between two derivations.
This framework is based on the individualization-refinement approach, used by many other the tools, e.g. nauty \cite{pgi:1}, Bliss \cite{bliss,bliss:2} or Traces \cite{pgi:2}.

The \EDE{} algorithm allows for multiple configurations.
We use \EDE{} for the simple version corresponding to Algorithm~\ref{alg:enumerate_derivations}.
The version with symmetry pruning, denoted \SEDE{} (\EDE{} with Symmetry pruning), only considers order-preserving and minimal extensions, as discussed in Section \ref{sec:non-iso-der}.
Conveniently, the individualization-refinement approach constructs the generators
of the automorphism group of the graph being canonicalized as a by-product, allowing \SEDE{} to use the generators in match minimality checking with no extra computation cost.
Similarly, canonicalization can be used to obtain automorphism group generators of rule left side connected components.

To construct the authomorphism group itself we employ the Schreier-Sims algorithm \cite{Sims1971}.
An implementation of this algorithm is provided by the GraphCanon library \cite{generic-graphcanon, gh:graphCanon}.
In practice, the construction of automorphism groups using the Schreier-Sims algorithm can be fairly
expensive with limited gain, especially in the case of graphs with few automorphisms.
The algorithm as presented in Section \ref{sec:non-iso-der}, however, does not require every automorphism of the group to be represented.
While using only a subset of automorphisms may lead to enumeration of some isomorphic derivations,
simple symmetries can be pruned based on the derived generators themselves.
Many of the symmetries commonly encountered in molecule graphs are indeed simple, e.g. hydrogen atoms of a carbon atom.
The abundance of simple symmetries often justifies skipping the costly automorphism group construction altogether,
motivating us to consider one more configuration, \SSEDE{} (\EDE{} with Simple Symmetry pruning), that only considers order-preserving and minimal matches, where minimality is determined using only the generators of the automorphism groups rather than the automorphism groups themselves.

\section{Experiments}
\label{sec:experiments}

Our work is heavily motivated by the need for efficient generation of chemical reaction networks.
We therefore employ reaction network generation as the testbed for comparing \EDE{} against the Partial Application (\PA{}) approach described in the introduction.
For the purposes of the experiments, we understand \PA{} in the limited scope of the rule application enumeration algorithm published in~\cite{andersen2016software}, which can be directly employed to solve Problem~\ref{prob:canon}.
Both \EDE{} and \PA{} are implemented within the MØD software framework, relying on the same infrastructure for the construction of the reaction network itself, thus ensuring a fair comparison.

We conduct four different experiments, two using artificial data and two using chemical data.
Each experiment consists of the iterative construction of a reaction network $N$
from an initial set of graphs $\mathcal{G}$ and a given set of graph transformation rules $\mathcal{R}$.
In each iteration, \EDE{} or \PA{} are used to enumerate all non-isomorphic derivations for each of the rules in $\mathcal{R}$ (Problem~\ref{prob:canon}) using the molecules of the network $N$ for the set $\mathcal{G}$.
The obtained derivations are then included in $N$, in the form of reactions linking the components of the host and result graphs, with any newly discovered graphs as new molecules included in $\mathcal{G}$ in the next iteration.

To demonstrate the differences in isomorphic match pruning performance, we employ \EDE{} in all three available settings (\EDE{}, \SEDE{} and \SSEDE{}).
These variations allow us to showcase the effect of early symmetry pruning on different examples.

Similarly, the existing \PA{} implementation also allows for different configurations with respect to isomorphism checking.
In particular, the implementation allows checking for isomorphisms between rules obtained by partial application.
Thus, if two different partial applications result in isomorphic rules, only one is preserved, effectively pruning isomorphic partial matches.
We consider \PA{} in both configurations, as plain \PA{} with no isomorphism checking (analogous to simple \EDE{}),
and as \IPA{} (\PA{} with Isomorphism checking) for the version where isomorphic rules are pruned after each partial application.

All results were obtained using a laptop computer equipped with a $9\textsuperscript{th}$ generation Intel core i7 processor and \SI{32}{\giga\byte} of RAM. The code has been compiled with the \texttt{g++} compiler version $8$, with optimization level \texttt{O3}.

\subsection{Binary Strings}
\label{sec:binary_strings}

The first experiment is designed to compare performance in a setting with no rule or host graph symmetries.
The experiment itself consists of connecting small molecules with a common ``backbone'' into chains, mimicking the arrangement of amino acids into polypeptide chains.
For the sake of simplicity,  the experiment follows a minimalistic design.
The common backbone of the molecules consists of a single C-O bond, leading to the chains being an interleaving of carbon and oxygen atoms.
The use of different atom types allows us to easily distinguish the ends of the chains (C-end and O-end) ensuring there are no symmetries in the constructed molecules.

We consider two initial graphs, shown in Figure~\ref{fig:binary_graph_a} and Figure~\ref{fig:binary_graph_b}, distinguished by their abstract side chain A or B.
From a formal language perspective, arranging these molecules into chains is equivalent to the construction of strings over the binary alphabet $\Sigma = \{\text{A}, \text{B}\}$.
We further consider the experiment in several variations based on the number~$k$ of connected components in the rule left side graph, for $k \in \{2, 3, 4\}$.
For each $k$ a different rule, $\texttt{chain}(k)$, is used to build chains by connecting $k$ molecules at a time.
An example for $k=3$ is shown in Figure~\ref{fig:binary_rule}.
For each $k$, the rule contains exactly one component that can match chains of more than one molecule.
This allows us to restrict the growth of the chains to only allow extensions of the O-end, and ensuring that there are no symmetries in the rules.
In terms of binary strings, each variation is building molecules corresponding to the regular expression $\{\text{A}, \text{B}\}^{n(k-1)+1}$ where $n \in \mathbb{N}$ is the number of rule applications.

To curtail the constructed reaction networks, which are theoretically infinite, we limit all variations to molecules of up to $22$ atoms, effectively restricting the binary strings to length $7$.
This restriction explains the drop in the size of the constructed network for increasing $k$ of $\texttt{chain(k)}$ (Figure~\ref{fig:binary_dg}) as it limits the number of network construction iterations.
Figure~\ref{fig:binary_calls} then depicts the total number of derivations reported by each algorithm (the number of yields in Algorithm~\ref{alg:enumerate_derivations}) across all iterations.
As there are no symmetries, all variations of \EDE{} compute the same number of derivations, corresponding to the size of the reaction network.
\PA{} on the other hand, computes multiple copies of the same derivation,
owing to the partial application not considering any order on the connected components of the rule left side graphs.
We can observe that the isomorphism checking for \IPA{} manages to eliminate some of the duplicate derivations---however, it does not achieve a higher efficiency as seen by the time taken in Figure~\ref{fig:binary_time}.
This follows from the number of partially applied rules growing exponentially with the number of connected components in the rule left side graph, making the pairwise isomorphism checking intractable.

Figure~\ref{fig:binary_time} further showcases that the speedup \EDE{} achieves over \PA{} seems consistent with the expected exponential in $k$.
Observe also that in spite of having no symmetries to exploit, \SEDE{} and \SSEDE{}
variations perform on par with \EDE{}.
This seems to indicate that the overhead introduced by order-preserving and minimality checks, including the construction of the symmetry groups, is negligible if no symmetries are present.

\begin{figure}
	\centering
	\begin{subfigure}[b]{0.25\textwidth}
	\centering
	\includegraphics[scale=0.7]{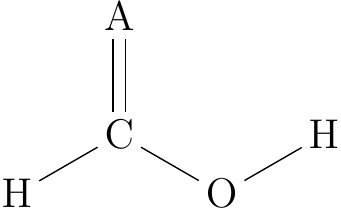}
	\caption{\label{fig:binary_graph_a}}
	\end{subfigure}
	\qquad
	\begin{subfigure}[b]{0.20\textwidth}
	\centering
	\includegraphics[scale=0.7]{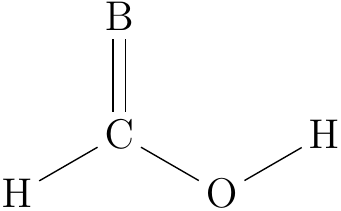}
	\caption{\label{fig:binary_graph_b}}
	\end{subfigure}

	\begin{subfigure}[b]{\textwidth}
	\centering
	\summaryRuleSpan[scale=0.7]{./figures/binary_strings/chain_3}
	\caption{\label{fig:binary_rule}}
	\end{subfigure}
	\caption{Initial graphs and example rule for Binary Strings.
	\subref{fig:binary_graph_a}-\subref{fig:binary_graph_b} The initial graphs.
	\subref{fig:binary_rule} The rule $\texttt{Chain}(3)$.
	The two first connected components of $L$ (from left to right) are joined at the end of a chain modeled by the third, and final, connected component of $L$.
	\label{fig:binary_grammar}}
\end{figure}

\begin{figure}
	\centering
	\begin{subfigure}[b]{0.49\textwidth}
		\begin{tikzpicture}
			\begin{axis}[
			ybar,
			novo bar style,
			ylabel={elements [counts]},
			xlabel={k},
			xtick=data,
			xmin=1.5,
			xmax=4.5,
			legend cell align={left},
			]
			\addplot[fill=papergrey] table[x=name, y=|E(N)|]{figures/binary_strings/binary_strings_dg.dat};
			\addplot[fill=paperred] table[x=name, y=|V(N)|]{figures/binary_strings/binary_strings_dg.dat};
			\legend{$|E(N)|$,$|V(N)|$}
			\end{axis}
		\end{tikzpicture}
		\caption{}
		\label{fig:binary_dg}
		\end{subfigure}
		\hfill
		\begin{subfigure}[b]{0.49\textwidth}
			\begin{tikzpicture}
			\begin{axis}[
			ybar,
			novo bar style,
			ylabel={derivations [counts]},
			xlabel={k},
			xtick=data,
			bar width=3.5pt,
			xmin=1.5,
			xmax=4.5,
			yticklabel pos=right,
			ytick pos=right,
			legend pos=north west,
			legend cell align={left},
			]
			\addplot[fill=papergrey] table[x=name, y=PA]{figures/binary_strings/binary_strings_calls.dat};
			\addplot[fill=paperred] table[x=name, y=PA-I]{figures/binary_strings/binary_strings_calls.dat};
			\addplot[fill=papergreen] table[x=name, y=EDE]{figures/binary_strings/binary_strings_calls.dat};
			\addplot[fill=paperblue] table[x=name, y=EDE-S]{figures/binary_strings/binary_strings_calls.dat};
			\addplot[fill=paperorange] table[x=name, y=EDE-SS]{figures/binary_strings/binary_strings_calls.dat};
			\legend{PA, PA-I,EDE,EDE-S,EDE-SS}
			\end{axis}
			\end{tikzpicture}
			\caption{}
			\label{fig:binary_calls}
		\end{subfigure}
		\\[.5cm]
		\begin{subfigure}[b]{0.49\textwidth}
			\begin{tikzpicture}
			\begin{axis}[
			ybar,
			novo bar style,
			ylabel={time [\si{\second}]},
			xlabel={k},
			xtick=data,
			bar width=3.5pt,
			xmin=1.5,
			xmax=4.5,
			legend pos=north west,
			legend cell align={left},
			]
			\addplot[fill=papergrey] table[x=name, y=PA]{figures/binary_strings/binary_strings_time.dat};
			\addplot[fill=paperred] table[x=name, y=PA-I]{figures/binary_strings/binary_strings_time.dat};
			\addplot[fill=papergreen] table[x=name, y=EDE]{figures/binary_strings/binary_strings_time.dat};
			\addplot[fill=paperblue] table[x=name, y=EDE-S]{figures/binary_strings/binary_strings_time.dat};
			\addplot[fill=paperorange] table[x=name, y=EDE-SS]{figures/binary_strings/binary_strings_time.dat};
			\legend{PA, PA-I,EDE,EDE-S,EDE-SS}
			\end{axis}
			\end{tikzpicture}
			\caption{}
			\label{fig:binary_time}
		\end{subfigure}
		\caption{Summary results for Binary Strings (Section \ref{sec:binary_strings}).
		\subref{fig:binary_dg} The number of reactions ($|E(N)|$) and molecules ($|V(N)|$) in the resulting reaction network.
		\subref{fig:binary_calls} The number of yielded direct derivations during construction of each network.
		\subref{fig:binary_time} Runtime of each reaction network construction.
		\label{fig:binary_stats}}
\end{figure}
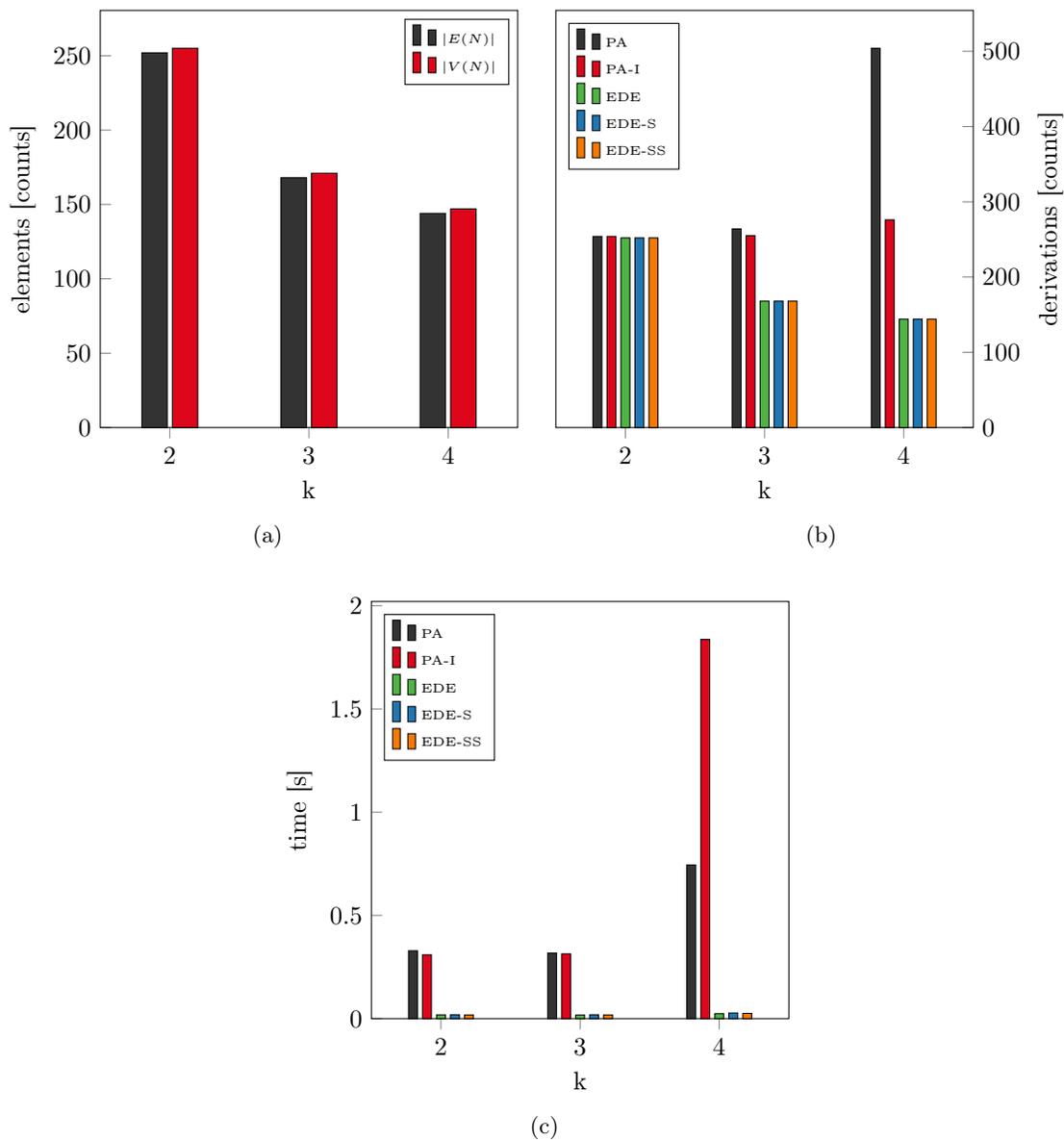

\subsection{Binary Trees}
\label{sec:binary_trees}

Complementing the first example, which focused on asymmetric rules and graphs, our second synthetic data experiment focuses on rules and graphs exhibiting various symmetries instead.
The experiment again consists of arranging the initial molecules into large structures. However, instead of simply growing in a linear fashion into a chain, we grow in all directions from a symmetric centerpiece with a four-cycle, shown in Figure~\ref{fig:binary_trees_square}.
Each carbon atom of the four-cycle then acts as a root of a binary tree of carbon atoms.
The tree is constructed by the rule depicted in Figure~\ref{fig:binary_trees_rule}, consisting of two symmetric components representing methane molecules (Figure~\ref{fig:binary_trees_star}) being attached to a leaf of a tree.
To prevent merging two trees, the methane molecules are fully specified in the rule.
Similarly, to prevent merging three methane molecules, the component representing the tree specifies an extra carbon atom.
The constraints imposed by this additional structure in the rule do not only allow us to control the growth of the binary trees, but in the case of the methane molecules also introduce symmetries into the graph components of the left side of the rule.

\begin{figure}
	\centering
	\begin{subfigure}[b]{0.25\textwidth}
		\centering
		\includegraphics[scale=0.7]{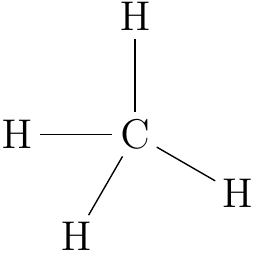}
		\caption{\label{fig:binary_trees_star}}
	\end{subfigure}
	\qquad
	\begin{subfigure}[b]{0.25\textwidth}
		\centering
		\includegraphics[scale=0.7]{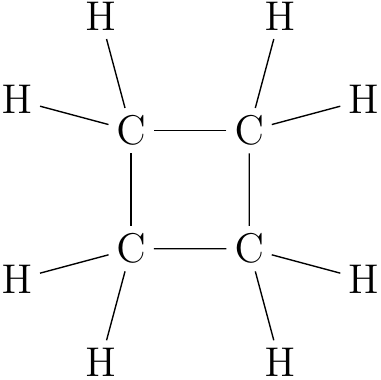}
		\caption{\label{fig:binary_trees_square}}
	\end{subfigure}

	\begin{subfigure}[b]{\textwidth}
		\centering
		\summaryRuleSpan[scale=0.7]{./figures/binary_trees/add_children}
		\caption{\label{fig:binary_trees_rule}}
	\end{subfigure}
	\caption{Graphs and rules used for Binary Trees (Section \ref{sec:binary_trees}).
		\subref{fig:binary_trees_star}-\subref{fig:binary_trees_square} The set of initial graphs.
		\subref{fig:binary_trees_rule} A rule that appends two children to the leaves (H atoms) of a binary tree.
		\label{fig:binary_trees_grammar}}
\end{figure}

We grow the reaction network for up to 10 iterations with a strict five minute timeout.
The results are summarized in the three graphs in Figure~\ref{fig:binary_trees_stats}, which show the size of the reaction network after each iteration, the number of derivations computed in each iteration, and the time taken to compute each iteration.
The benefit of pruning isomorphic matches, respectively isomorphic partially applied rules, is made clear by the number of computed derivations shown in Figure~\ref{fig:binary_trees_calls}.
The high number of isomorphic derivations enumerated by both \EDE{} and \PA{} is also reflected in the running time, leading to \EDE{} and \PA{} timing out during the $7\textsuperscript{th}$, respectively $6\textsuperscript{th}$, iteration.

While both \SEDE{} and \SSEDE{} successfully compute all $10$ iterations, \IPA{} timed out during
the $9\textsuperscript{th}$ derivation.
One can observe that the time taken by \IPA{} grows more sharply than the number of computed derivations, hinting again at the inefficiency of the pairwise isomorphism checks between the partially applied rules.
Another interesting observation is the discrepancy between run times and numbers of computed derivations between \SEDE{} and \SSEDE{}.
Due to the presence of complex symmetries, which are not captured by the generators, \SSEDE{} computes significantly more derivations in the $10\textsuperscript{th}$ iteration (\num{3e4}) compared to \SEDE{} (\num{7e3}).
\SSEDE{} still achieves better runtime, however, as \SEDE{} sufferes from the overhead of constructing the full symmetry group for each graph.

\begin{figure}
	\centering
	\begin{subfigure}[b]{.49\textwidth}
		\begin{tikzpicture}
		\begin{axis}[
		novo style,
		width=\linewidth,
		height=\linewidth,
		ylabel={elements [counts]},
		xlabel={iteration},
		ymax=100,
		legend pos=north west,
			legend cell align={left},
		]
		\plotfile{figures/binary_trees/binary_trees_dg.dat}
		\end{axis}
		\end{tikzpicture}
		\caption{}
		\label{fig:binary_trees_dg}
	\end{subfigure}
	\hfill
	\begin{subfigure}[b]{.49\textwidth}
		\begin{tikzpicture}
		\begin{axis}[
		novo style, 
		width=\linewidth,
		height=\linewidth,
		ylabel={derivations [counts]},
		xlabel={iteration},
		ymax=1200000,
		yticklabel pos=right,
		ytick pos=right,
			legend cell align={left},
		]
		\plotfile{figures/binary_trees/binary_trees_calls.dat}
		\end{axis}
		\end{tikzpicture}
		\caption{}
		\label{fig:binary_trees_calls}
	\end{subfigure}
	\\[.5cm]
	\begin{subfigure}[b]{.49\textwidth}
		\begin{tikzpicture}
		\begin{axis}[
		novo style, width=\linewidth,
		height=\linewidth,
		ylabel={time [\si{\second}]},
		xlabel={iteration},
			legend cell align={left},
		]
		\plotfile{figures/binary_trees/binary_trees_time.dat}
		\end{axis}
		\end{tikzpicture}
		\caption{}
		\label{fig:binary_trees_time}
	\end{subfigure}
	\caption{Summary results for Binary Trees (Section \ref{sec:binary_trees}).
		\subref{fig:binary_trees_dg} The number of reactions and molecules in the resulting reaction network.
		\subref{fig:binary_trees_calls} The number of yielded direct derivations during construction of each network.
		\subref{fig:binary_trees_time} Construction time for each reaction network.
		\label{fig:binary_trees_stats}}
\end{figure}
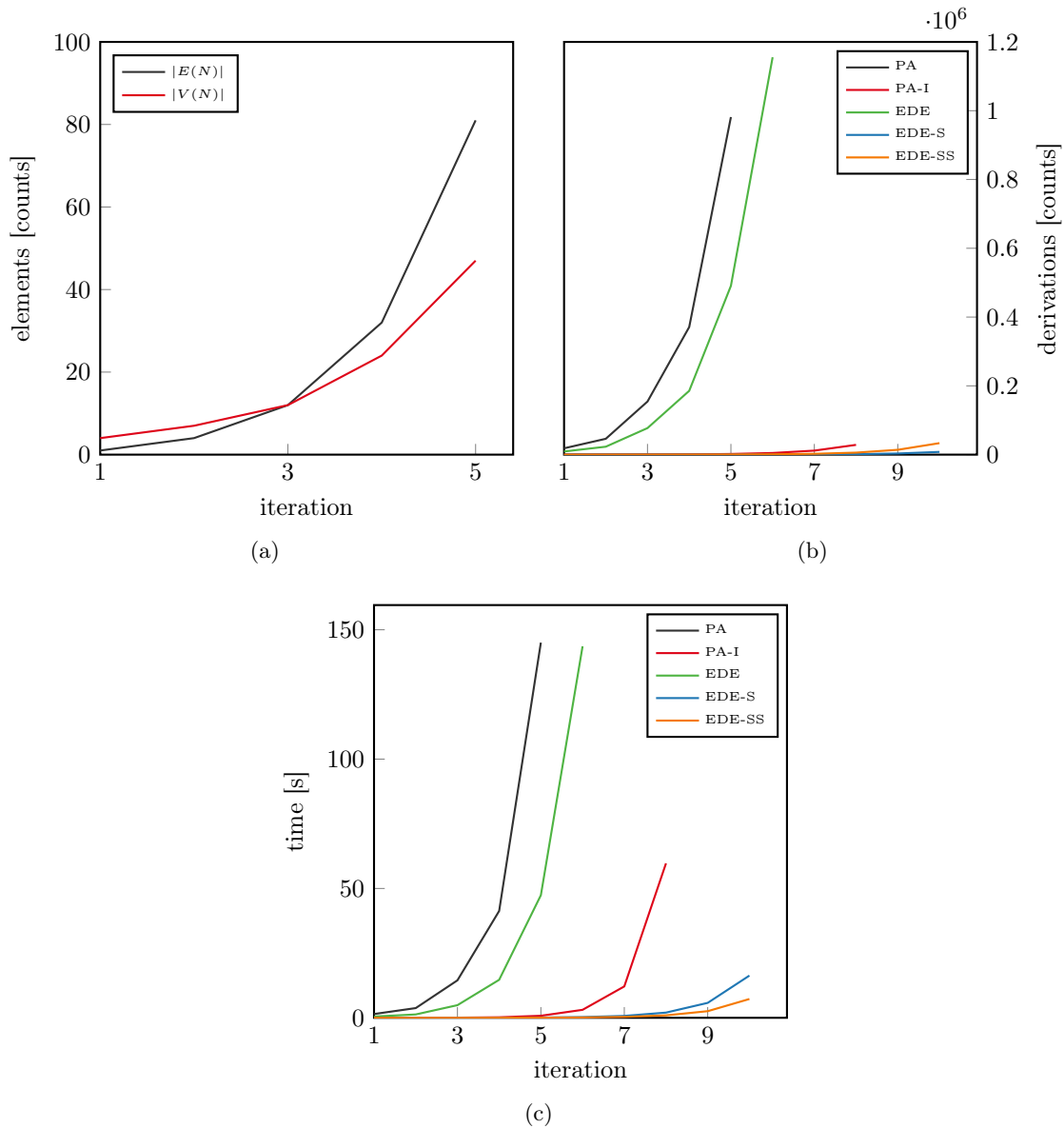

\subsection{The Formose Chemistry} \label{sec:formose}

For our first chemical experiment we select the well-known formose reaction describing the formation of sugars from formaldehyde.
The formose reaction was first modeled by graph transformation rules in~\cite{strat:14}.
Roughly speaking, the formation of sugars consists of polymerisation of formaldehyde molecules.
This is a two-step process where a sugar molecule first undergoes keto-enol isomerization (Figure~\ref{fig:formose_keto}) thereby preparing itself for the aldol addition (Figure~\ref{fig:formose_aldol}).
The aldol addition then binds a formaldehyde or another compatible molecule to the enolized sugar.
The inverse of both of the rules are also considered.

The performance comparison on the formose reaction gives an interesting baseline, as it is a chemical example which does not play into the strengths of our approach.
Indeed, only one out of the four rules has more than a single connected component in the left side graph.

Similarly to the binary string example (Section~\ref{sec:binary_strings}), we want to limit the reaction network size.
We therefore consider the formose reaction in multiple variations, with $n \in \{1, \dots, 13\}$ being the maximal number of carbon atoms in a single molecule.
The results for each $n$ are presented in Figure~\ref{fig:formose_stats}.

Due to the low number of connected components in the left side graphs of the rules, as well as the relatively small amount of symmetries, the number of computed derivations closely reflects the size of the reaction network and barely differs between all the variations of \PA{} and \EDE{}.
This is also reflected in the runtime of the algorithms, Figure~\ref{fig:formose_time}.
One can observe, however, that despite the similarity in the number of computed derivations, all variations of \EDE{} outperform both \PA{} variations, with the \SEDE{} trailing behind \EDE{} and \SSEDE{} due to the overhead from symmetry group construction.

\begin{figure}
	\centering
	\begin{subfigure}[b]{0.25\textwidth}
	\centering
	\includegraphics[scale=0.7]{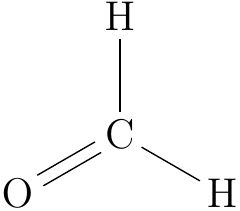}
	\caption{\label{fig:formose_formaldehyde}}
	\end{subfigure}
	\qquad
	\begin{subfigure}[b]{0.25\textwidth}
	\centering
	\includegraphics[scale=0.7]{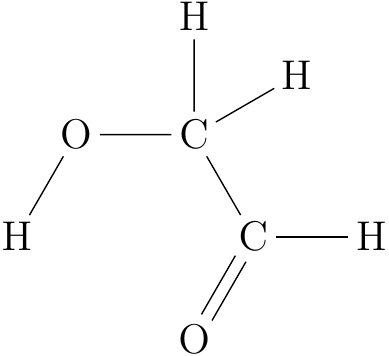}
	\caption{\label{fig:formose_glycolaldehyde}}
	\end{subfigure}

	\begin{subfigure}[b]{\textwidth}
	\centering
	\summaryRuleSpan[scale=0.7]{./figures/formose/keto_enol}
	\caption{\label{fig:formose_keto}}
	\end{subfigure}

	\begin{subfigure}[b]{\textwidth}
	\centering
	\summaryRuleSpan[scale=0.7]{./figures/formose/aldol}
	\caption{\label{fig:formose_aldol}}
	\end{subfigure}

	\caption{Graphs and rules used for formose (Section \ref{sec:formose}).
	\subref{fig:formose_formaldehyde} Formaldehyde.
	\subref{fig:formose_glycolaldehyde} Glycolaldehyde.
	\subref{fig:formose_keto}-\subref{fig:formose_aldol} Rules that together with their inverses model the formose chemistry.
	The rules model keto-enol-tautomerization and aldol addition, respectively.
	\label{fig:formose_grammar}}
\end{figure}

\begin{figure}
	\centering
	\begin{subfigure}[b]{.49\textwidth}
		\begin{tikzpicture}
		\begin{axis}[
		novo style,
		width=\linewidth,
		height=\linewidth,
		ylabel={elements [counts]},
		xlabel={$n$},
		ymax=25000,
		legend pos=north west,
			legend cell align={left},
		]
		\plotfile{figures/formose/formose_dg_stats.dat}
		\end{axis}
		\end{tikzpicture}
		\caption{}
		\label{fig:formose_dg}
	\end{subfigure}
	\hfill
	\begin{subfigure}[b]{.49\textwidth}
		\begin{tikzpicture}
		\begin{axis}[
		novo style, 
		width=\linewidth,
		height=\linewidth,
		ylabel={derivations [counts]},
		xlabel={$n$},
		ymax=26000,
		yticklabel pos=right,
		legend pos=north west,
		ytick pos=right,
			legend cell align={left},
		]
		\plotfile{figures/formose/formose_calls.dat}
		\end{axis}
		\end{tikzpicture}
		\caption{}
		\label{fig:formose_calls}
	\end{subfigure}
	\\[.5cm]
	\begin{subfigure}[b]{.49\textwidth}
		\begin{tikzpicture}
		\begin{axis}[
		novo style, width=\linewidth,
		height=\linewidth,
		ylabel={time [\si{\second}]},
		xlabel={$n$},
		ymax=1300,
		legend pos=north west,
			legend cell align={left},
		]
		\plotfile{figures/formose/formose_time.dat}
		\end{axis}
		\end{tikzpicture}
		\caption{}
		\label{fig:formose_time}
	\end{subfigure}
	\caption{Summary results for the formose chemistry (Section \ref{sec:formose}).
		The horizontal axis refers to the number of C atoms allowed in a connected component of the product graph. Derivations that produced molecules with more than $n$ carbon atoms were not considered. 
		\subref{fig:formose_dg} The number of reactions and molecules in the resulting reaction network.
		\subref{fig:formose_calls} The number of yielded direct derivations during construction of each network.
		\subref{fig:formose_time} Construction time for each reaction network.
		}
		\label{fig:formose_stats}
\end{figure}
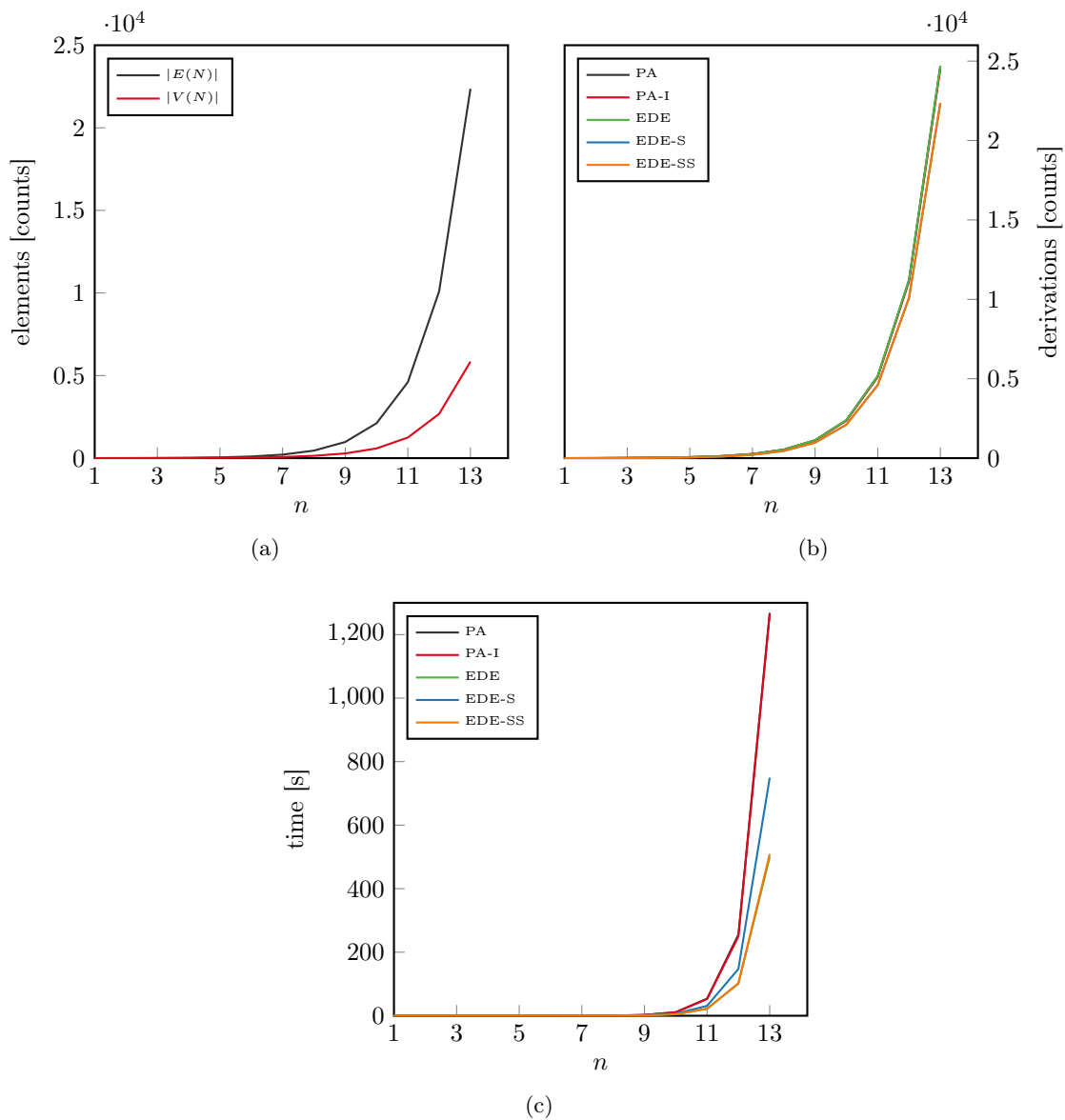

\subsection{Enzymatic Mechanisms} \label{sec:mcsa}
For the final experiment we examine enzymatic mechanisms modeled by graph transformation rules.
In simple terms, a reaction catalyzed by an enzyme can be viewed as a multi-step process, often known as a mechanism, which consists of elementary reaction steps converting the educts to products via a series of intermediate states, possibly temporarily modifying the enzyme itself in the process.
This view is adopted by a hand-curated database of enzymatic mechanisms, the Mechanism and Catalytic Site Atlas (M-CSA)~\cite{ribeiro2018mechanism}.
The M-CSA database has served~\cite{andersen2021graph} as a foundation for representing the reaction steps of enzymatic mechanisms by graph transformation rules, making the knowledge in M-CSA executable.
As a result, a total of $368$ different enzymatic mechanisms have been fully represented by graph transformation rules in~\cite{andersen2021graph}.
We use these enzymatic mechanism rules in our experiment.

In~\cite{andersen2021graph}, rules from all mechanisms are pooled together and used to propose new mechanisms for a given reaction.
In our experiments, to curtail the combinatorial explosion, we consider each of the $368$ mechanisms in isolation.
As such, each mechanism is only allowed to use the rules that have been derived from the mechanism itself.
To further control the reaction networks growth, we only allow molecules of up to $100$ atoms to appear on the product side and impose a runtime limit of 30 minutes per mechanism.
Under these conditions there are eight instances in which \PA{} cannot finish, twice due to the time limit and six times due to running out of available memory, limited to \SI{32}{\giga\byte} by the hardware of the machine used for the experiments.
Among all variations of \EDE{}, three instances fail to finish, two due to time and one due to memory.
In the following, we focus on the $360$ enzymatic mechanisms for which all configurations of both algorithms finish within the allocated resource limits.

Table~\ref{tab:mcsa_relative_all} lists runtime statistics of all tested algorithms on all enzymatic mechanisms as a relative percentage of the running time of \PA{}.
In the vast majority of cases, the produced reaction network is relatively small, and all algorithms finish the computation within a second.
The run time of \PA{} exceeds one second in only $20$ instances.
Irrespective of the total run time, all three variations of \EDE{} outperform both \PA{} and \IPA{} on almost all enzymatic mechanisms considered.
The exception being \SEDE{}, which performs slower than \PA{} on a few examples.
All such instances are on the small reaction networks, the difference between \PA{} and \SEDE{} being in the order of hundredths of a second.

On the other hand, Table~\ref{tab:mcsa_relative} lists running times as a relative percentage of the running time of \PA{} on the $20$ mechanisms that take more than a second to compute.
This comparison makes it clear that all variations of \EDE{} perform on average four to almost eight times faster than \PA{}, \SSEDE{} finishing consistently at least twice as fast
for reaction networks of sufficient complexity.

Table~\ref{tab:mcsa_relative} suggest \SSEDE{} as the fastest alternative.
This result is not surprising as many of the symmetries found in molecules are relatively simple,
such as symmetries between hydrogen atoms bonded to the same carbon atom, and can therefore be captured by a single generator.
More complex symmetries resulting from a composition of generators are far more rare, making the time saved by not constructing the symmetry groups overshadow the time saved by avoiding enumeration of all matches isomorphic by means of complex symmetries.

Finally, we highlight the $20$ enzymatic mechanisms that take the longest time to compute for \PA{}.
The performance of all algorithms as a relative percentage of the \PA{} run time is given individually for each of the $20$ mechanisms in Figure~\ref{fig:mcsa_rel_time_all}.
To avoid the issue of scale, the same data for only variations of \EDE{} is given in Figure~\ref{fig:mcsa_rel_time}.
One can remark that the performance varies greatly from one mechanism to the other.
An interesting case is the mechanism \texttt{131\_1}, which contains several rules with at least seven connected components in the left side graph.
As intended, \EDE{} has little trouble with applications of such rules, while the pairwise isomorphism checking in \IPA{} suffers from the exponential explosion in the number of partially applied rules.

\begin{table}
\centering
\begin{tabular}{lS[table-format=3.2]S[table-format=3.2]S[table-format=3.2]}
\toprule
			&        {mean} &         {min} &          {max} \\
\midrule
\EDE      &   27.39 &    0.44 &    90.55 \\
\SEDE    &   24.35 &    0.05 &   106.12 \\
\SSEDE   &   17.89 &    0.04 &    86.55 \\
\PA      &  100.00 &  100.00 &   100.00 \\
\IPA    &   96.13 &   13.55 &  1501.56 \\
\bottomrule
\end{tabular}
\caption{Average, minimum, and maximum recorded runtime across all enzymatic mechanisms for which all versions of \EDE{} and \PA{} finished.
All values are given as a percentage of the plain \PA{} runtime.
\label{tab:mcsa_relative_all}}
\end{table}

\begin{table}
\centering
\begin{tabular}{lS[table-format=3.2]S[table-format=3.2]S[table-format=3.2]}
\toprule
{} &        {mean} &         {min} &          {max} \\
\midrule
\EDE      &   25.06 &    0.44 &    65.87 \\
\SEDE    &   15.60 &    0.05 &    58.90 \\
\SSEDE   &   12.94 &    0.04 &    49.90 \\
\PA      &  100.00 &  100.00 &   100.00 \\
\IPA    &  233.37 &   18.56 &  1501.56 \\
\bottomrule
\end{tabular}

\caption{The same comparison as in Table~\ref{tab:mcsa_relative_all}, however, only for the 20 mechanisms with non-trivial runtime (more than one second for the plain \PA{}).
All values are given as a percentage of the plain \PA{} runtime.
\label{tab:mcsa_relative}}
\end{table}

\afterpage{
\clearpage
\begin{landscape}
\begin{figure}
	\centering
	\begin{subfigure}[b]{\linewidth}
		\begin{tikzpicture}
		\begin{axis}[
		ybar,
		novo bar style,
		ylabel={relative time [\si{\percent}]},
		xlabel={mechanism},
		xtick=data,
		bar width=2pt,
		height=.25\linewidth,
		tick label style={rotate=90},
			legend cell align={left},
		symbolic x coords={76\_1,131\_1,191\_1,243\_1,254\_2,331\_1,350\_1,512\_1,512\_2,526\_1,527\_1,557\_1,557\_2,582\_1,584\_1,593\_1,604\_1,659\_2,785\_1,889\_1},
		]
		\addplot[fill=papergrey] table[x=name, y=PA]{figures/mcsa/mcsa_rel_time_all.dat};
		\addplot[fill=paperred] table[x=name, y=PA-I]{figures/mcsa/mcsa_rel_time_all.dat};
		\addplot[fill=papergreen] table[x=name, y=EDE]{figures/mcsa/mcsa_rel_time_all.dat};
		\addplot[fill=paperblue] table[x=name, y=EDE-S]{figures/mcsa/mcsa_rel_time_all.dat};
		\addplot[fill=paperorange] table[x=name, y=EDE-SS]{figures/mcsa/mcsa_rel_time_all.dat};
		\legend{PA, PA-I,EDE,EDE-S,EDE-SS}
		\end{axis}
		\end{tikzpicture}
		\caption{}
		\label{fig:mcsa_rel_time_all}
	\end{subfigure}
	\\[.5cm]
	\begin{subfigure}[b]{\linewidth}
		\begin{tikzpicture}
		\begin{axis}[
		ybar,
		novo bar style,
		ylabel={relative time [\si{\percent}]},
		xlabel={mechanism},
		xtick=data,
		bar width=3pt,
		height=.25\linewidth,
		tick label style={rotate=90},
			legend cell align={left},
		symbolic x coords={76\_1,131\_1,191\_1,243\_1,254\_2,331\_1,350\_1,512\_1,512\_2,526\_1,527\_1,557\_1,557\_2,582\_1,584\_1,593\_1,604\_1,659\_2,785\_1,889\_1},
		]
		\addplot[fill=papergreen] table[x=name, y=EDE]{figures/mcsa/mcsa_rel_time.dat};
		\addplot[fill=paperblue] table[x=name, y=EDE-S]{figures/mcsa/mcsa_rel_time.dat};
		\addplot[fill=paperorange] table[x=name, y=EDE-SS]{figures/mcsa/mcsa_rel_time.dat};
		\legend{EDE,EDE-S,EDE-SS}
		\end{axis}
		\end{tikzpicture}
		\caption{}
		\label{fig:mcsa_rel_time}
	\end{subfigure}
	\caption{Summary results for enzymatic mechanisms (Section \ref{sec:mcsa}).
	The horizontal axes refer to the corresponding mechanism entry and proposal in the M-CSA database.
	\subref{fig:mcsa_rel_time_all} The relative runtime of all variants, scaled to the plain \PA{} variant.
	\subref{fig:mcsa_rel_time} The relative runtime of \EDE{} variants scaled to the plain \PA{} variant.
	\label{fig:mcsa_stats}}
\end{figure}
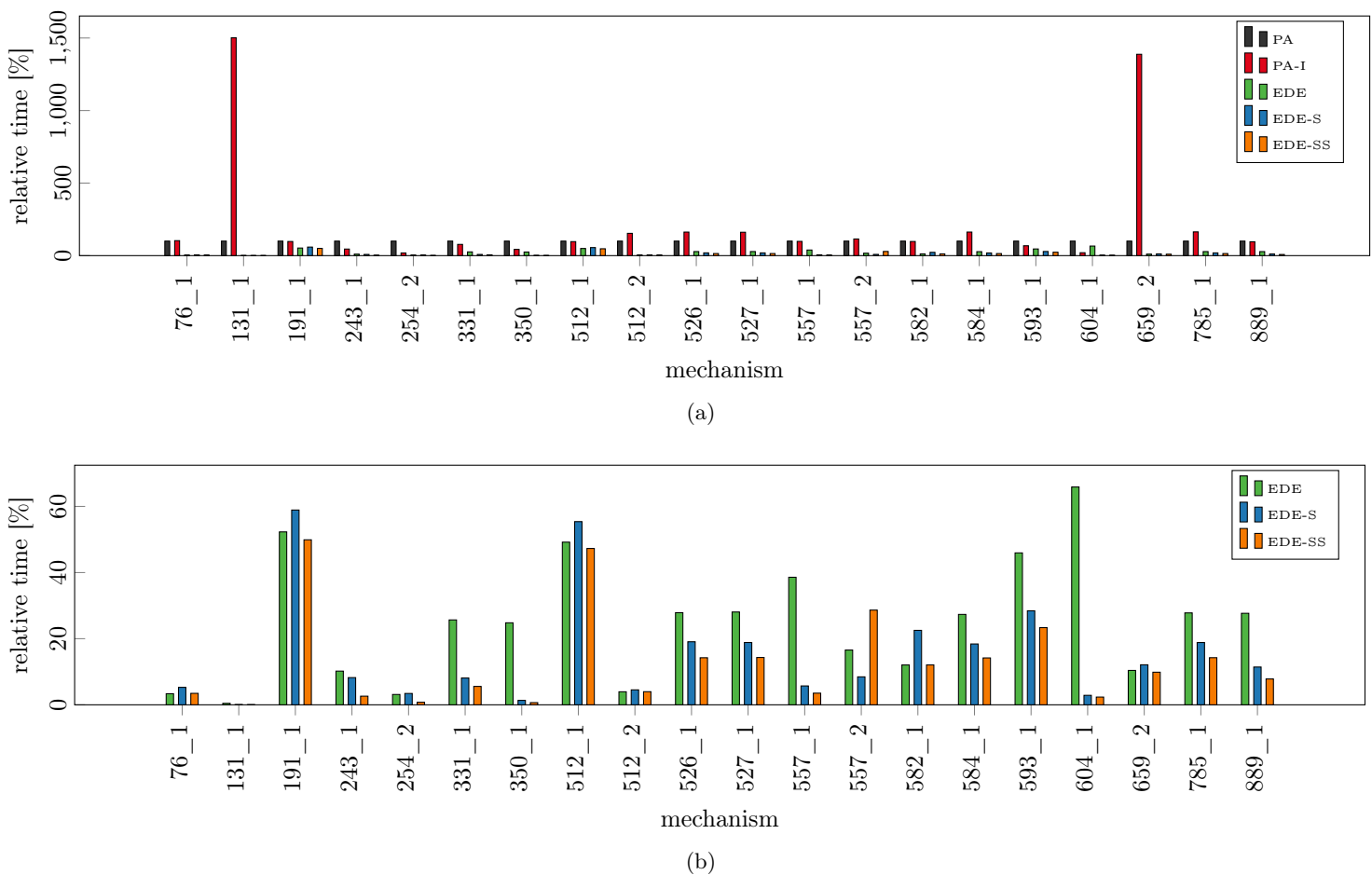
\end{landscape}
}

\section{Conclusion}
\label{sec:conclusion}

We design a new efficient algorithm for enumerating the matches of graph transformation rules in the application area of chemical reaction networks.
The proposed algorithm is designed to improve processor time when rules have multiple connected components on the left side. It also allows for trading off time for space by means of a pre-computation of the monomorphisms between rule components and potential host graphs.
While the algorithm is motivated specifically by rules with multiple connected components on the left side, our results show that the algorithm remains highly competitive even when rules with few connected components, e.g., the formose reaction in Section~\ref{sec:formose}, are utilized.

We further propose an alternative version of the algorithm which in addition to the validity of the matches also checks their uniqueness with respect to the resulting derivation isomorphisms.
This is achieved by the introduction of a total order on the partial matches based on the symmetry groups of the host molecules as well as the rules themselves in order to determine if a smaller partial match is available.
As a result, a partial match that does not lead to any non-isomorphic derivations can be discarded early, already during the enumeration, thereby saving resources.
We not only demonstrate that the new algorithm vastly outperforms existing methods,
which suffer from the combinatorial explosion in the number of derivations when checking for isomorphisms,
but also that not all symmetries are necessary to obtain a significant speedup.
This allows us to avoid the expensive operation of computing the automorphism groups of the graphs involved, using only generators of said groups instead.

\bibliographystyle{plain}
\bibliography{article}

\end{document}